\newcommand{\parencite}[1]{\cite{#1}}
\newcommand{\Ber}{\mathsf{Ber}}
\newcommand{\vAMP}{v_{\mathrm{AMP}}}
\newcommand{\vamp}{v_{\mathrm{AMP}}}
\DeclareMathOperator{\op}{\mathsf{op}}
\DeclareMathOperator{\plim}{\operatornamewithlimits{p-lim}}
\DeclareMathOperator{\PL}{PL}
\newcommand{\ind}{\boldsymbol{1}}
\DeclareMathOperator{\clip}{\mathsf{clip}}
\DeclareMathOperator{\cutoff}{\mathsf{cutoff}}
\newcommand{\sign}{\mathsf{sign}}
\author{Misha Ivkov\thanks{Stanford University. \texttt{mishai@stanford.edu}. Supported by NSF Graduate Research Fellowship.} \and Tselil Schramm\thanks{Stanford University.  \texttt{tselil@stanford.edu}. Supported by NSF CAREER award \# 2143246.}}
\title{Fast, robust approximate message passing}
\date{\today}
\begin{document}
\maketitle

\begin{abstract}
We give a fast, spectral procedure for implementing approximate-message passing (AMP) algorithms robustly.
For any quadratic optimization problem over symmetric matrices $X$ with independent subgaussian entries, and any separable AMP algorithm $\calA$, our algorithm performs a spectral pre-processing step and then mildly modifies the iterates of $\calA$.
If given the perturbed input $X + E \in \R^{n \times n}$ for any $E$ supported on a $\eps n \times \eps n$ principal minor, our algorithm outputs a solution $\hat v$ which is guaranteed to be close to the output of $\calA$ on the uncorrupted $X$, with $\|\calA(X) - \hat v\|_2 \le f(\eps) \|\calA(X)\|_2$ where $f(\eps) \to 0$ as $\eps \to 0$ depending only on $\eps$.

\end{abstract}

{ \hypersetup{hidelinks} \tableofcontents }
\thispagestyle{empty}
\clearpage
\setcounter{page}{1}

\section{Introduction}

Approximate Message Passing (AMP) is a family of algorithmic methods which generalize matrix power iteration.
Suppose we are given a symmetric matrix $X \in \R^{n \times n}$, and our goal is to maximize the quadratic form $v^\top X v$ over vectors $v$ in some constraint set $K$.
The basic AMP algorithm starts from some initialization $x^{(0)} \in \R^n$ and computes iterates $x^{(1)},x^{(2)},\ldots$ by setting $x^{(t+1)} \approx Xf(x^{(t)})$,\footnote{The $\approx$ relation hides a lower-order additive term, the ``Onsager correction,'' which depends on $x^{(t)}$. For the sake of simplicity we ignore this in the present discussion.} 
where the ``denoiser'' $f$ is a function (of the algorithm designers' choosing) from $\R \to \R$ applied coordinate-wise.
The goal of the ``powering'' action, $X x^{(t)}$, is to increase the quadratic form, while the denoiser $f$ is chosen to bring $f(x^{(t)})$ close to the constraint set $K$.

AMP algorithms are extremely popular in high-dimensional statistics.
In this context, given a prior distribution over the matrix $X$, it is often possible to optimize the design of the denoisers $f$ in such a way that AMP gives an FPTAS, in that $x^{(t)}$ obtains an $(1-\eps)$-optimal solution for $t$ large enough as a function of $\eps$.
Introduced initially as a generalization of Belief Propagation methods from statistical physics \cite{Bolt14,DMM09,BM11}, AMP algorithms are now state-of-the-art for a variety of average-case optimization problems, including compressed sensing \cite{DMM09}, sparse Principal Components Analysis (PCA) \parencite{DM14}, linear regression \parencite{DMM09,BM11,KMSSZ12}, non-negative PCA \parencite{MR15}, and more (many additional examples may be found in the surveys \parencite{Mon12,FVRS22}).
One especially notable recent application is the breakthrough work of Montanari  for optimizing the Sherrington-Kirkpatrick Hamiltonian, an average-case version of max-cut \cite{Mon21}.

One major drawback of AMP algorithms is that they are not robust.
The NP-hardness of quadratic optimization means that, obviously, one cannot hope for the optimality of AMP on average-case inputs to generalize to arbitrary inputs $X$.
But even structured perturbations can throw AMP off \cite{CZK14,RSFS19}; for example, an additive perturbation to $X$ by a rank-$1$ matrix of large norm, or planting a principal minor of uniform sign (as described in \cite{IS24}).

Our prior work addressing this issue \cite{IS24} shows that for a certain class of adversarial corruptions, AMP can be simulated robustly by polynomial-sized semidefinite programming relaxations in the ``local statistics hierarchy.''
While this result is a proof of concept that a robust version of AMP is possible, it is perhaps more interesting from a complexity-theoretic perspective than an algorithmic one:
the semidefinite programs are of size $n^{\exp(t)}$, where $t$ is the number of AMP iterations.
When AMP is an FPTAS, the algorithm of \cite{IS24} gives a robust PTAS, but the running time is too slow to feasibly implement on any computer.

In the present work, we obtain \emph{simple and fast spectral algorithms} which run in time $O(n^3)$, while not just matching but even \emph{improving} on the robustness guarantees of \cite{IS24}.
In the ``spectral algorithms from sum-of-squares analyses'' line of work (initiated in \cite{HSSS16}), our result stands out as giving a particularly dramatic reduction in running time, as well as in yielding a significantly simpler analysis.

\subsection{Setup and definitions}

We give some necessary definitions of AMP and the noise model that we consider.
\begin{definition}[AMP algorithm]\label{def:amp}
An \emph{Approximate Message Passing algorithm} is specified by a sequence of denoiser functions $\calF = f_0,f_1,f_2,\ldots$, with $f_t : \R^{t+1} \to \R$ for each $t \in \N$. 
It takes as input a symmetric $n \times n$ matrix $X$, a number of iterations $T \in \N$, and produces a sequence of iterates $x^{(0)},x^{(1)},\ldots,x^{(T)}$, with $x^{(0)} = \vec{1}$ and 
\[
x^{(t+1)} = Xf_{t}(x^{(t)},x^{(t-1)},\ldots,x^{(0)}) - \Delta_{t}(x^{(t)},x^{(t-1)},\ldots,x^{(0)}),
\]
where $f_{t}$ is applied coordinate-wise, and $\Delta_t$ is the \emph{Onsager correction term} for decreasing correlations between iterates and is fully determined by $\calF$ (see \pref{def:onsager}).
AMP algorithms often also come with a \emph{rounding} procedure which is applied to the final iterate, in order to ensure it satisfies the optimization constraints.
\end{definition}
We note that we are considering \emph{separable} AMP algorithms (where the denoisers are applied coordinate-wise) with fixed starting point $x^{(0)} = \vec{1}$. 
In full generality AMP may relax both of these criteria, but the majority of AMP analyses are compatible with these assumptions.

\begin{example}[non-negative PCA]\label{eg:nnpca}
In the non-negative principal components analysis (PCA) problem, one is given a matrix $X \in \R^{n\times n}$ and asked to maximize $v^\top X v$ over non-negative unit vectors $v \ge 0$.
The AMP algorithm which starts from $x^{(0)} = \vec{1}$ and uniformly chooses the separable denoiser $f_{s}(x^{(s)},\ldots,x^{(0)}) = f(x^{(s)})$, with $f(x) = \max(x,0)$, is an FPTAS for non-negative PCA on $X$ with i.i.d. subgaussian entries \cite{MR15}.\footnote{Technically $x^{(t)}$ may not be a unit vector nor non-negative, but AMP algorithms such as this one usually include a final ``rounding'' step---in this case, the rounding is just applying $f(x) = \max(x,0)$ followed by projection to the unit ball.}
In this case, up to the Onsager correction, AMP coincides with projected gradient ascent with ``infinite'' step size.
\end{example}

We will allow adversarially-chosen perturbations in the following model.
\begin{definition}[$\eps$-principal minor corruption]
Given matrices $X,Y \in \R^{n\times n}$, we say $Y$ is an \emph{$\eps$-principal minor corruption} of $X$ if $Y-X$ is supported on an $\eps n \times \eps n$-principal minor.
\end{definition}

A mean-$0$ random variable $\bX$ is said to be \emph{$\sigma$-subgaussian} if for each integer $k \in \N$, $\E[|\bX|^k] \le \sigma^k k^{k/2}$.
For example, a mean-$0$ Gaussian with variance $\sigma^2$ is $\sigma$-subgaussian, and a uniformly random sign $\in \{\pm 1\}$ is $1$-subgaussian.
Note that rescaling a $\sigma$-subgaussian variable $\bX$ to $C \bX$ for constant $C$ rescales the subgaussian parameter to $C\sigma$.
\subsection{Results}
Our main theorem is the following.

\begin{theorem}[Informal version of \pref{thm:main-principal}]\label{thm:main-intro}
Suppose $\calA$ is a $T$-step AMP algorithm with $O(1)$-Lipschitz or polynomial denoiser functions.
Let $X$ be a symmetric $n \times n$ matrix with i.i.d. $\frac{O(1)}{\sqrt{n}}$-subgaussian entries having mean $0$ and variance $\frac 1n$, and let $\vamp(X)$ be the output of $\calA$ on $X$.
Then there exists an algorithm which when given access to an $\eps$-principal minor corruption $Y$ produces in time $O(\eps n^3 \log n)$ a vector $\hat v(Y)$  satisfying
\[
\|\hat v(Y) - \vamp(X)\|^2 \le O(\eps \log^d \tfrac{1}{\eps}) \cdot \|\vamp(X)\|^2,
\]
with probability $1-o(1)$ over the randomness of $X$, where $d = 1$ if the denoisers are Lipschitz, and $d = k^T$ if the denoisers are degree $\le k$ polynomials.
\end{theorem}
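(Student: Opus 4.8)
The plan is to split the argument into three pieces, joined by a spectral pre-processing step. First I would use the structure of the corruption to reduce, via pre-processing, to a \emph{benign} corruption --- one of bounded operator norm that is (approximately) supported on $O(\eps n)$ coordinates. Write $E := Y-X$; since $E$ is supported on an $\eps n \times \eps n$ principal minor it has rank $\le 2\eps n$, while with high probability $X$ is an approximately-Wigner matrix with $\|X\|_{\op} \le 2+o(1)$. By Weyl's inequality $Y$ then has at most $2\eps n$ eigenvalues outside a window of constant width, and any ``outlier'' eigenpair $(\lambda,v)$ of $Y$ satisfies $\|v - Ev/\lambda\| \le \|X\|_{\op}/|\lambda|$; since $\mathrm{range}(E)$ lies in the coordinate subspace $\R^{S}$ indexed by the corrupted set $S$, each outlier eigenvector is thus nearly supported on $S$. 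I would compute these $O(\eps n)$ extreme eigenpairs by block power iteration --- this is where the running time $O(\eps n^{3}\log n)$ is spent: an $O(\eps n)$-wide block, $O(\log n)$ rounds, $O(n^{2})$ per matrix--vector product --- read off from their thresholded supports a set $S'$ of size $O(\eps n)$ containing the ``large'' corrupted directions, and replace $Y$ by $\tilde Y$ obtained by zeroing the $S'\times S'$ block. Then $\tilde Y$ agrees with $X$ off $S'\times S'$ and, up to a small operator-norm error coming from the tails of the outlier eigenvectors, the residual $\tilde Y - X$ is benign.

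The deterministic core is the following: run $\calA$ on $\tilde Y$, but with a mild modification --- zero out the $S'$-coordinates of each iterate before passing it to the next denoiser (equivalently, constrain the iterates to be supported on $(S')^{c}$). Because $\tilde Y$ restricted to $(S')^{c}\times(S')^{c}$ equals $X$ restricted there, this is a run of $\calA$ on the $(1-O(\eps))n$-dimensional Wigner matrix $X_{(S')^{c}}$, plus a bounded residual. I would then show two things. First, that $\vamp(X_{(S')^{c}})$ is close to $\vamp(X)$: a coupling/continuity argument for state evolution under deletion of an $O(\eps)$-fraction of the coordinates, where the per-step error is controlled using Lipschitzness of the $f_{t}$ (or polynomial-growth bounds) together with the fact that the deleted coordinates carry only an $O(\eps\log^{d}\tfrac1\eps)$-fraction of each iterate's $\ell_2^{2}$ mass. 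Second, that neither the iterate modification nor the bounded residual perturbation disturbs the true iterates by more than the target amount --- again a consequence of the iterates being spread out, since a spread-out vector has no localized component to remove and a bounded corruption supported near $S$ injects only $\sqrt{\eps}\log^{d/2}(1/\eps)$ relative error per step into the powering $\tilde Y f_{t}(\cdot)$. Tracking the recursion yields relative error $\sqrt{\eps}\,\log(1/\eps)$ when the denoisers are Lipschitz; for degree-$\le k$ polynomial denoisers the composed iterates have heavier (still polylogarithmic) tails, so the ``spread-out-ness'' estimate degrades multiplicatively through each of the $T$ compositions, which is where the $\log^{k^{T}}(1/\eps)$ factor comes from.

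The probabilistic inputs, which I would obtain from AMP state evolution upgraded to quantitative concentration, are that with high probability over $X$: $\|X\|_{\op}\le 2+o(1)$; each iterate $x^{(t)}$ for $t\le T$ has $\|x^{(t)}\|_{2}^{2}$ bounded below by a constant (so the relative bound is meaningful); and, uniformly over all coordinate sets of size $\le O(\eps)n$, the iterate (and its image under $f_{t}$) places only an $O(\eps\log^{d}\tfrac1\eps)$-fraction of its $\ell_2^{2}$ mass there --- equivalently, the empirical distribution of its coordinates matches the state-evolution limit with no heavy coordinates. These are exactly the roles of the two-set concentration, Azuma-type, and degree-bound estimates developed in the paper; the real work is getting concentration that is \emph{simultaneously} uniform over the exponentially many coordinate sets $S$ and sharp enough per iterate to feed the deterministic recursion, which for polynomial denoisers forces one to propagate high moments through compositions rather than argue with operator norms alone.

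The step I expect to be the main obstacle is the interface between pre-processing and the deterministic core. Zeroing a block chosen from $Y$'s outlier eigenvectors only approximately recovers the support structure of the corruption, and the outlier eigenvectors themselves are only \emph{nearly} supported on $S$, so one must show quantitatively that the leftover is ``localized enough'' that the spread-out-ness of the true iterates annihilates its contribution, while at the same time ensuring the modification does not move the true iterates. Carrying the estimate ``outliers are nearly supported on $S$'' through and aligning it with the per-step error budget is the delicate part. A secondary difficulty is the polynomial-denoiser case, where the clean contraction available for Lipschitz $f_{t}$ is lost and one must instead bound how a small localized perturbation inflates after $T$ degree-$k$ denoisers --- the source of the $k^{T}$ in the exponent --- which needs the high-moment estimates rather than operator-norm arithmetic.
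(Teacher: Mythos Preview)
Your high-level plan (spectral cleaning, then modified AMP, with order-statistics control coming from state evolution) matches the paper's, but two of the three pieces are implemented differently, and the first has a genuine gap.

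\textbf{Spectral cleaning.} The localization bound $\|v - Ev/\lambda\| \le \|X\|_{\op}/|\lambda|$ is only informative when $|\lambda| \gg \|X\|_{\op}$. An adversary can place all of $E$'s spectrum just above the bulk edge (say $\|E\|_{\op}$ a small constant multiple of $\|X\|_{\op}$), so that every outlier eigenvector of $Y$ carries $\Theta(1)$ of its mass off $S$; no thresholding of their supports will then reliably produce an $S'$ with $S \subseteq S'$ and $|S'| = O(\eps n)$ simultaneously. Your next claim, ``$\tilde Y$ agrees with $X$ off $S'\times S'$,'' silently presumes $S \subseteq S'$ and is otherwise false --- zeroing the wrong principal block leaves residual corruption of uncontrolled operator norm. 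The paper avoids identifying $S$ altogether: Algorithm~\ref{alg:spectral-cleaning} repeatedly samples a single index $i$ with probability $v_i^2$ from the current top eigenvector and zeros that entire row \emph{and} column, stopping once $\|\hat Y\|_{\op} \le 5\,\E\|X\|_{\op}$. The analysis (the Claim inside the proof of Lemma~\ref{lem:spectral-cleaning}) shows that while the norm is still large, at least half of the top eigenvector's $\ell_2^2$ mass lies on the remaining corrupted indices, so each round removes a corrupted index with probability $\ge \tfrac12 - o(1)$; stochastic domination by Bernoullis gives termination in $\le 4\eps n$ rounds w.h.p. The output is a $4\eps$-\emph{restriction} (full rows and columns zeroed, not just a principal block), and the downstream decomposition $\hat Y - X = E + F_1 + F_2$ uses exactly this structure.

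\textbf{AMP modification.} The paper does not project iterates onto $(S')^c$ --- which would require a good $S'$ --- but instead \emph{clips} every coordinate of every iterate to $[-\sqrt{C_T\log(1/\eps)},\,\sqrt{C_T\log(1/\eps)}]$. For polynomial denoisers this is the crux: Proposition~\ref{prop:pl-preserves} needs $\max_i\|y_i\|$ bounded to control $\|f_t(y)-f_t(x)\|$ through the pseudo-Lipschitz inequality, and clipping supplies this by fiat while costing only $O(\sqrt{\eps n})$ in $\ell_2$ (Proposition~\ref{prop:clipping-preserves}, via the tail bound of Corollary~\ref{cor:amp-conditioning}). The analysis then compares the clipped iterates $y^{(t)}$ directly to $x^{(t)}$, bounding the contributions of $E$, $F_1$, $F_2$ separately via the order statistics of the true iterates (Proposition~\ref{prop:block-sparse-small}). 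Your detour through $\vamp(X_{(S')^c})$ and a ``continuity of state evolution under coordinate deletion'' argument is not how the paper proceeds; it might be made to work, but it is an additional lemma you would have to prove, and it is strictly more work than the direct comparison the paper carries out.
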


In words, given access to an adversarially corrupted matrix $Y$, our algorithm can find a vector $\hat v(Y)$ which is close to the output of AMP on the uncorrupted matrix $X$.\footnote{Since $X$ has bounded operator norm, this implies that $\hat v(Y)$ has objective value $\hat v^\top X\hat v$ within an additive $\tilde{O}(\sqrt{\eps})$ of the objective of $\vamp(X)$.}
The result improves on that of \cite{IS24} in that it (1) runs in time $O(\eps n^3\log n)$ rather than $n^{\exp(T)}$, and (2) guarantees that $\|\calA(X) - \hat v(Y)\| \le f(\eps)\|\calA(X)\|$ for a function $f(\eps) \to_\eps 0$ which is independent of $n$ (but does depend on $T$), whereas in $\cite{IS24}$ the function $f(\eps)$ included a multiplicative factor of $\polylog(n)$, and thus was trivial unless $\eps = o(1)$.

As noted in \cite{IS24}, an equivalent result is information-theoretically impossible under the stronger corruption model in which $X-Y$ is supported on $\eps n^2$ arbitrary entries (unless $\eps = o(n^{-1/2})$).

As a direct corollary, we can robustly simulate Montanari's algorithm \cite{Mon21} for finding the ground state of the Sherrington-Kirkpatrick Hamiltonian---that is, an approximately optimal solution for Max-Cut with i.i.d. Gaussian edge weights.
\begin{corollary}[Fast, robust Sherrington Kirkpatrick]
Suppose $X$ is a symmetric matrix with entries sampled i.i.d. from $\calN(0,\frac{1}{n})$.
Then there is an algorithm which when run on an $\eps$-principal minor corruption $Y$ of $X$, with probability $1-o(1)$ produces in time $O(\eps n^3\log n)$ a unit vector $\hat v(Y) \in \{\pm 1/\sqrt{n}\}^n$ achieving objective value $\hat v(Y)^\top X \hat v(Y) \ge \mathrm{OBJ_{AMP}} - O(\sqrt{\eps \log \frac{1}{\eps}})$.
\end{corollary}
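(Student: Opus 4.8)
The plan is to instantiate our main theorem, \pref{thm:main-intro}, with $\calA$ taken to be Montanari's AMP algorithm for the Sherrington--Kirkpatrick model \cite{Mon21}, and then to round its output to a vector supported on $\{\pm 1/\sqrt n\}^n$. First I would collect the facts we need about \cite{Mon21}: for the desired accuracy it fixes a number of iterations $T$ and a sequence of separable denoisers which (for this fixed $T$) are $O(1)$-Lipschitz --- they arise from a discretization of the Parisi / Auffinger--Chen variational problem and have uniformly bounded derivative --- and on a symmetric matrix $X$ with i.i.d.\ $\calN(0,\tfrac1n)$ entries the algorithm's output $\vamp(X)$ is a unit vector in $\{\pm 1/\sqrt n\}^n$ with $\vamp(X)^\top X\vamp(X) \ge \mathrm{OBJ_{AMP}}$ with probability $1-o(1)$. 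Since $\calN(0,\tfrac1n)$ has mean $0$, variance $\tfrac1n$ and is $\tfrac1{\sqrt n}$-subgaussian, this $\calA$ meets the hypotheses of \pref{thm:main-intro}, and because its denoisers are Lipschitz we land in the case $d=1$.

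Next I would apply \pref{thm:main-intro} to the $\eps$-principal minor corruption $Y$: in time $O(\eps n^3\log n)$ and with probability $1-o(1)$ this produces a vector $\hat u(Y)$ with $\|\hat u(Y) - \vamp(X)\|^2 \le O(\eps\log\tfrac1\eps)\cdot\|\vamp(X)\|^2 = O(\eps\log\tfrac1\eps)$, using $\|\vamp(X)\|=1$. Since $\hat u(Y)$ need not be $\pm1/\sqrt n$-valued, I would output $\hat v(Y) := \tfrac1{\sqrt n}\sign(\hat u(Y))$ (breaking ties arbitrarily). The key point is that this rounding loses only a constant factor \emph{because} $\vamp(X)$ is already $\pm 1/\sqrt n$-valued, so each of its coordinates has magnitude exactly $1/\sqrt n$: a sign disagreement between $\hat u(Y)$ and $\vamp(X)$ at coordinate $i$ forces $|\hat u(Y)_i - \vamp(X)_i| \ge 1/\sqrt n$, hence $|\{i:\sign(\hat u(Y)_i)\neq\sign(\vamp(X)_i)\}| \le n\,\|\hat u(Y) - \vamp(X)\|^2$, and therefore $\|\hat v(Y) - \vamp(X)\|^2 = \tfrac4n\,|\{i:\sign(\hat u(Y)_i)\neq\sign(\vamp(X)_i)\}| \le 4\,\|\hat u(Y)-\vamp(X)\|^2 = O(\eps\log\tfrac1\eps)$.

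Finally I would turn $\ell_2$-closeness into an objective bound: on the event $\|X\|_{\op}\le 2+o(1)$ (which holds with probability $1-o(1)$ by the semicircle law), writing $\delta := \|\hat v(Y)-\vamp(X)\|$ and expanding bilinearly using symmetry of $X$, $\hat v(Y)^\top X\hat v(Y) = \vamp(X)^\top X\vamp(X) + (\hat v(Y)-\vamp(X))^\top X(\hat v(Y)+\vamp(X)) \ge \vamp(X)^\top X\vamp(X) - \|X\|_{\op}\,\delta\,(\|\hat v(Y)\|+\|\vamp(X)\|) \ge \mathrm{OBJ_{AMP}} - O(\delta) = \mathrm{OBJ_{AMP}} - O(\sqrt{\eps\log\tfrac1\eps})$, using that $\hat v(Y)$ and $\vamp(X)$ are unit vectors. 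A union bound over the three $1-o(1)$ events completes the argument, and since running $\calA$ itself costs only $O(Tn^2)$ the running time is dominated by the $O(\eps n^3\log n)$ call to \pref{thm:main-intro}.

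The step I expect to require the most care is making the instantiation fully rigorous: one must verify that Montanari's algorithm is a separable AMP algorithm in the sense of \pref{def:amp} whose denoisers are $O(1)$-Lipschitz (and not Lipschitz with a constant that degrades with $T$), and --- crucially --- that the ``output of $\calA$'' against which \pref{thm:main-intro} compares $\hat v(Y)$ may be taken to be the already-rounded $\pm1/\sqrt n$-valued vector rather than the raw final iterate $x^{(T)}$. This last point is exactly what lets the rounding step above lose only a constant factor in $\ell_2$: sign-rounding a vector with many near-zero coordinates, as $x^{(T)}$ has (its empirical coordinate distribution being roughly Gaussian), would instead cost $\Omega(\eps^{1/3})$, too weak for the stated bound.
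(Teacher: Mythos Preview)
Your proposal is correct and follows the paper's approach: instantiate \pref{thm:main-intro} with Montanari's algorithm (Lipschitz denoisers, hence $d=1$) and convert $\ell_2$-closeness into an objective bound via $\|X\|_{\op}=O(1)$. The paper's own argument is a one-line invocation of exactly these facts.

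The only place you diverge is in handling the rounding to $\{\pm 1/\sqrt n\}^n$. The paper absorbs it into the main theorem by asserting that ``the rounding scheme applied to place the final iterate in the hypercube is also Lipschitz,'' whereas you sign-round explicitly and argue it loses only a constant factor in $\ell_2$ because the comparator $\vamp(X)$ is already hypercube-valued. Your closing concern is well taken: a map into a discrete set cannot be literally Lipschitz, so the paper's statement must be read in the effective sense that Montanari's final IAMP iterate has coordinates bounded away from zero with high probability, making sign-rounding stable on the relevant inputs. Your explicit rounding analysis and the paper's Lipschitz assertion therefore rest on the same underlying property of Montanari's algorithm, just packaged differently; neither route avoids invoking that property.
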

The value $\mathrm{OBJ_{AMP}}$ is the objective value achieved by Montanari's AMP algorithm; modulo a widely-believed conjecture in statistical physics, $\mathrm{OBJ_{AMP}}$ approaches $\mathrm{OPT} = \max_{v \in \{\pm 1/\sqrt{n}} v^\top X v \approx 1.52$ as $T \to \infty$.
The corollary follows from \pref{thm:main-intro} because Montanari's denoisers are Lipschitz, and the rounding scheme applied to place the final iterate in the hypercube is also Lipschitz.

\medskip
In \pref{sec:spectral}, we give a simple proof (along similar lines as the proof of \pref{thm:main-intro}) that AMP is robust to adversarial perturbations of small spectral norm.
This fact is folklore, but we feel our proof is quite simple and may be of interest.
\subsection{Experiments}

Our algorithm is fast enough that it can be easily implemented and run on a laptop.
We have run some experiments to demonstrate the utility of our method.
We consider the non-negative PCA objective described in \pref{eg:nnpca}.
In \cite{MR15}, it was shown that AMP with denoiser function $f(x) = \max(0,x)$ is an FPTAS for $\mathrm{OPT} = \max_{v \ge 0,\|v\|=1}v^\top X v = \sqrt{2}$.

In \pref{fig:exp}, we show the result for $n = 3000, \eps = 0.02$, with the adversarial corruption given by perturbing an $\eps n\times\eps n$ principal minor by sampling two independent rank $50 = \frac{5}{6}\eps n$ Wishart matrices, each normalized to have expected Frobenius norm $100$, and adding one and subtracting the other.
Without having taken pains to optimize the running time, the implementation in Python on a laptop takes less than 5 minutes.
We have plotted (1) the correlation of our algorithm's output, $\hat v(Y)$, with $\vamp(X)$, and (2) the objective value of the output for the \emph{uncorrupted} matrix $X$, $\hat v(Y)^\top X \hat v(Y)$, as a function of the number of iterations.
For comparison, we plot in \pref{fig:exp} the performance of (a) AMP on the corrupt matrix, $\vamp(Y)$, and (b) AMP on a ``naive'' spectral cleaning $\tilde{Y}$ of $Y$, given by deleting all larger-than-expected eigenvalues.
Our procedure performs much better than AMP on the corrupt input.
Empirically, the naive cleaning performance is comparable to ours, but unlike our algorithm, the naive procedure does not come with provable guarantees for arbitrary perturbations (and we suspect the naive procedure may be succeeding due to a small-$n$ effect).

\begin{figure}
\centering
\includegraphics[width=0.49\textwidth]{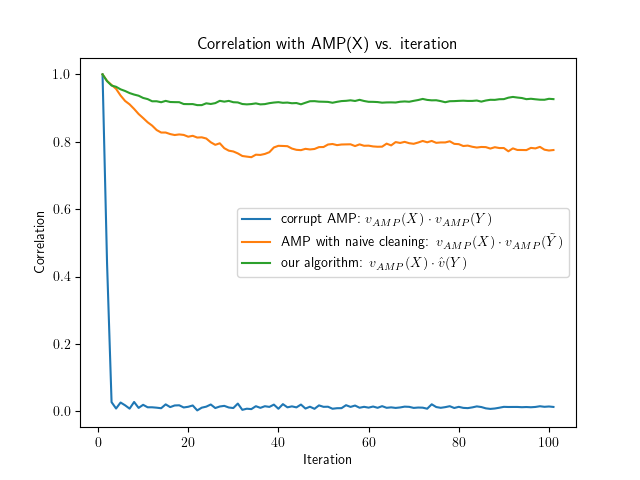}
\includegraphics[width=0.49\textwidth]{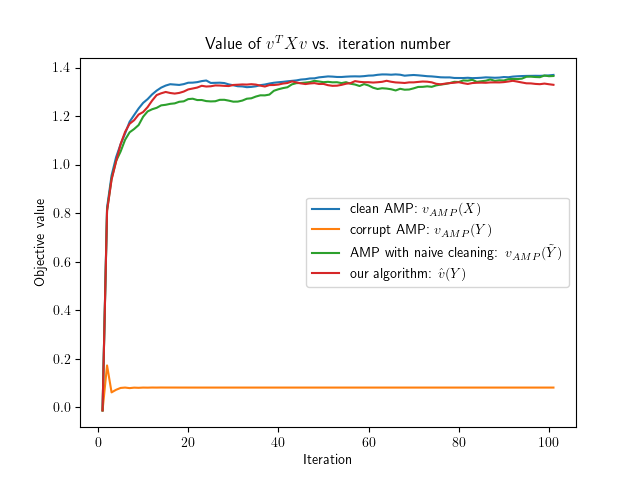}
\caption{Plot of the correlation of the vector $\hat v(Y)$ with the output of AMP on the ``clean'' matrix $X$, and of the objective value attained by $\hat v(Y)$ on the clean matrix $X$.}\label{fig:exp}
\end{figure}

\subsection{Discussion}

We give a fast spectral algorithm for simulating AMP under adversarial principal minor corruptions.
Our algorithm is an implementation of the ``spectral algorithms from sum-of-squares (SoS) analyses'' strategy introduced in \cite{HSSS16}.
We find it to be a particularly striking example of this strategy---not only was the running time reduced from $n^{\exp(T)}$ to $O(n^3)$, but also, the analysis very transparently mimics/distills that of \cite{IS24} to yield a much cleaner argument.
We draw a comparison to previous spectral-to-SoS analyses in robust statistics, most of which have been based on a ``filtering'' approach (e.g. \cite{JLST21,DKKLSS19}); in the filtering algorithms, the non-SoS analysis required significant additional tools.
Another fitting comparison is to recent works obtaining robust spectral algorithms for community recovery in the stochastic block model \cite{MRW24,DOHS23,DONS22}, where it was important to have a very fine-grained understanding of the spectrum of specific matrices.
In our case, we are able to get away with a much simpler analysis.

Though we have improved on the result in \cite{IS24} in terms of running time and the robustness-accuracy tradeoff, we differ from our prior work in one aspect: we require a description of the denoisers $\calF$ used in the AMP algorithm $\calA$, whereas the algorithm in \cite{IS24} has access only to the low-degree moments of the joint distribution over $X,\calA(X)$.
We find it unlikely that a fast algorithm could succeed without a description of $\calF$, but we pose this as a question nonetheless.

Another question is whether our error guarantees are optimal, as a function of the number of AMP iterations $T$.
In our theorem, the $\tilde{O}(\sqrt{\eps})$ hides factors that grow with the number of AMP iterations; however our experiments (\pref{fig:exp}) seem to suggest that the error stabilizes---is this a small $n$ effect?
Or perhaps an artifact of the specific perturbation from our experiments?

One clear direction for future work is making AMP robust when the input matrix $X$ has planted structure, rather than just having i.i.d. subgaussian entries.
For example, AMP has been a successful algorithm for ``spiked matrix models'' in which $X = G + \lambda uu^\top$ with $G$ a Gaussian matrix and $uu^\top$ a rank-1 spike, the goal often being to find $u$ given $X$.
In this case, it is not completely clear which noise model to study. 
In some cases (e.g. when $u$ is sparse) a principal minor corruption could simply erase the spike $uu^\top$.
However, it is an interesting question whether our techniques can be extended to this case---currently, our algorithm incorporates information about i.i.d. subgaussian variables, which makes it inappropriate for planted models (the same is true of \cite{IS24}).

Finally, it is interesting to consider alternative corruption models.
The principal minor corruption is tractable to study, and the fact that it is adversarial makes it a powerful model.
We know from \cite{IS24} that a similar result is information-theoretically impossible under the strongest sparse adversarial corruption model, in which an arbitrary subset of $\eps n^2$ entries is perturbed.
However, it would be interesting to consider alternative corruption models that more faithfully model the distribution shift one expects to see in practice, for example in the application of compressed sensing.

\subsection{Technical overview}
Though the proof of \pref{thm:main-intro} is not long, we briefly summarize the main ideas here. 
For the sake of simplicity, in this technical overview we pretend that the AMP iteration has the form $x^{(t)} = X f(x^{(t-1)})$, ignoring the Onsager correction and the dependence on more than one prior iterate.

Recall that we are given an $\eps$-principal minor corruption $Y$ of $X$.
The fact that $X$ has i.i.d. subgaussian entries of variance $\frac{1}{n}$ implies that with high probability, $\|X\|_{\op} = O(1)$.
The first step of our algorithm is a spectral procedure which removes $O(\eps n)$ rows and columns of $Y$, producing a matrix $\hat Y$ with $\|\hat Y\|_{\op} = O(1)$.
Then, we run a modified version of the AMP algorithm on the cleaned input matrix $\hat Y$, producing iterates $y^{(1)},y^{(2)},\ldots$ just as the original AMP algorithm would have, except that at each iteration we \emph{clip} the entries $y^{(t)} = \hat Y f(\clip(y^{(t-1)}))$, so that the magnitude of all entries of $\clip(y^{(t-1)})$ does not exceed the $\eps$-quantile\footnote{In the proof we choose the threshold to not exactly correspond to the $\eps$-quantile, but this choice would have also worked and is simpler for the sake of this overview.} value $O(\polylog\frac{1}{\eps})$ of the entries in a typical iterate $x^{(t-1)}$ from a \emph{clean} input matrix.

We argue that $\|y^{(t)} - x^{(t)}\| \le \tilde{O}(\sqrt{\eps}) \|x^{(t)}\|$ by induction on $t$;
In the base case, $t=0$, the iterates are identical as $x^{(t)} = \vec{1} = y^{(t)}$.
Now for $t\ge 1$, suppose that $x^{(t)}$ is the (unobserved) iterate AMP would have produced on $X$.
Then 
\begin{align}
\left\|y^{(t)} - x^{(t)}\right\|
&= \left\|\hat Y  f(\clip(y^{(t-1)})) - X f(x^{(t-1)})\right\|\nonumber \\
&\le  \left\| \hat Y(f(\clip(y^{(t-1)})) - f(x^{(t-1)}))\right\| + \left\|(\hat Y - X) f(x^{(t-1)})\right\|\nonumber \\
&\le  \left\| \hat Y\right\|_{\op}\left\|f(\clip(y^{(t-1)})) - f(x^{(t-1)})\right\| + \left\|(\hat Y - X) f(x^{(t-1)})\right\|\label{eq:terms}
\end{align}
The spectral cleaning ensures that $\|Y\|_{\op} = O(1)$.
To further bound the first term in \pref{eq:terms}, consider the illustrative case of the denoiser $f(x) = x^2$.
Then for any vectors $a,b$, $f(a) - f(b) = (a+b)\circ (a-b)$, for $\circ$ the entrywise product.
Thus we have
\begin{align}
\left\|f(\clip(y^{(t-1)})) - f(x^{(t-1)})\right\|
&= \left\|(\clip(y^{(t-1)}) + x^{(t-1)})\circ(\clip(y^{(t-1)}) - x^{(t-1)})\right\|\nonumber \\
&\le \left\|\clip(y^{(t-1)})\right\|_\infty \cdot \left\|\clip(y^{(t-1)}) - x^{(t-1)}\right\| + \left\|x^{(t-1)}\circ(\clip(y^{(t-1)}) - x^{(t-1)})\right\| \label{eq:infty}
\end{align}
The first and second terms of \pref{eq:infty} are bounded in a similar manner, we begin by explaining the first.
Because of the clipping procedure, $\|\clip(y^{(t-1)})\|_\infty = O(\polylog \frac 1\eps)$.
Further, by the triangle inequality, 
\begin{equation}
\|\clip(y^{(t-1)}) - x^{(t-1)}\| \le \|\clip(y^{(t-1)}) - \clip(x^{(t-1)})\| + \|\clip(x^{(t-1)}) - x^{(t-1)}\|.\label{eq:clip}
\end{equation}
The first term on the right of \pref{eq:clip} can be bounded by $\tilde{O}(\sqrt{\eps})\cdot \|x^{(t-1)}\|$ from the inductive hypothesis, because the $\clip$ function is $1$-Lipschitz.
The second term in \pref{eq:clip} can be bounded by $\tilde{O}(\sqrt{\eps})\cdot \|x^{(t-1)}\|$, 
because the distribution of $x^{(t-1)}$'s entries is known, and is roughly that of independent polynomials in Gaussian random variables.
To bound the second term from \pref{eq:infty}, we separate the contribution of the entries of $x^{(t-1)}$ which are bounded by $O(\polylog \frac{1}{\eps})$, to which we can apply an identical argument, and the entries which exceed this threshold, and then appeal to the fact that these integrate to a small total.
A similar argument can be used for arbitrary polynomial $f$ (for Lipschitz $f$, \pref{eq:terms} can be bounded directly and the clipping is not necessary).

To bound the second term in \pref{eq:terms}, we use the fact that $\hat Y -X$ can be written as the sum of a matrix $E$, supported on an $\eps n \times \eps n$ principal minor, and a matrix $F$ which is equal to the support of $-X$ on at most $O(\eps n)$ rows/columns---these are precisely the rows/columns of $Y$ which were removed to form $\hat Y$, but were not involved in the initial principal minor corruption.
So, $\|(\hat Y - X) f(x^{(t-1)})\| \le \|Ef(x^{(t-1)})\| + \|Ff(x^{(t-1)})\|$.
Since $E$ is supported on $\eps n$ columns, 
\[
\|E f(x^{(t-1)})\| \le \|E\|_{\op} \cdot \max_{I \subset [n],|I| = \eps n} \sum_{i\in I} f(x^{(t-1)})_i^2.
\]
Here again, because we know the order statistics of $x^{(t-1)}$, and because $f$ is required to be a well-behaved function, the maximum norm of $f(x^{(t-1)})$ when restricted to a subset of $\eps n$ coordinates is on the order of $\tilde{O}(\sqrt{\eps}) \|x^{(t-1)}\|$.
Also, since $E$ is a submatrix of $\hat Y - X$, $\|E\|_{\op} \le \|\hat Y \|_{\op} + \|X\|_{\op} \le 12$.

The matrix $F$ can be split into the part $F_1$ supported on $O(\eps n)$ columns, for which the argument is identical to the case of $E f(x^{(t-1)}$ above.
But there is also a part $F_2$ supported on $O(\eps n)$ rows.
Here, we have to take a different perspective: since $F_2$ is a restriction of $-X$ to the rows indexed by some set $T \subset [n]$ with $|T| = \eps n$, we have that $F_2 f(x^{(t-1)}) = (-X f(x^{(t-1)}))_T$, which is an $\eps n$-sparse subset of the vector $-X f(x^{(t-1)}$.
But we understand the order statistics of this vector too! 
Hence we have that $\|F_2 f(x^{(t-1)})\| = \tilde{O}(\sqrt{\eps})\|x^{(t-1)}\|$ as desired.

Putting everything together, we have that $\|y^{(t)} - x^{(t)}\| \le \tilde{O}(\sqrt{\eps}) \cdot \|x^{(t-1)}\|$.
The argument is now finished by again using our knowledge of the distribution of $x^{(t-1)}$ to conclude that $\|x^{(t-1)}\|$ and $\|x^{(t)}\|$ are within constant scalings of each other.

\medskip

Much of this analysis mirrors and simplifies the analysis in \cite{IS24}.
There, a semidefinite program is used to obtain a pseudoexpectation of a ``cleaned'' version $\hat X$ of $Y$.
The semidefinite program has formal variables for low-degree symmetric polynomials of $\hat X$.
It adds constraints to try to enforce that $\|\hat X\|_{\op} =O(1)$, that $\hat X - Y$ be supported on a principal minor (by introducing indicator variables for ``clean'' rows and columns), as well as the constraint that some symmetric vector-valued polynomials in the entries of $\hat X$ have entries which are no larger than corresponding polynomials in $X$. 

The high-level sequence of arguments mirrors those outlined in \pref{eq:terms} and the subsequent lines.
We introduce some additional structure/arguments because our spectral cleaning step (for which we design a natural-in-hindsight spectral cleaning algorithm) deletes rows and columns.
One advantage of the present argument over that in \cite{IS24} is that it is unclear how to make a semidefinite program leverage the order statistics of vector-valued polynomials, so in our prior work we crudely enforce a bound on the infinity norm of the vectors, which gives rise to $\polylog n$ factors.
Here we are able to circumvent this because we clip our iterates by hand.

\section{AMP preliminaries}

To complete \pref{def:amp} from the introduction, we must define the Onsager correction term.
\begin{definition}[Onsager correction]\label{def:onsager}
The \emph{Onsager correction term} for the AMP algorithm defined by denoisers $\calF = f_1,\ldots$ on input $X$ with iterates $x^{(0)},x^{(1)},\ldots$ is the quantity
\[
\Delta_t(v_{t},\ldots,v_0) = \sum_{j=1}^{t} B_{t,j} \cdot f_{j-1}(x^{(j-1)},\ldots,x^{(0)})
\]
where $B_{t,j} = \E_X[b_{t,j}]$ where $b_{t,j}$ is the normalized divergence of $f_t$ with respect to $x^{(j)}$:
\[ 
b_{t,j} = \frac 1n\sum_{i=1}^{n}\left.\frac{\partial f_t(x_i^{t},\ldots,u_i^{j},\ldots,,x_i^{0})}{\partial u_i^j}\right|_{u^j\rightarrow x^j}.
\]
\end{definition}

We remark that the Onsager correction is usually defined with the function $b_{t,j}$ in place of the constant $B_{t,j}$ (and in fact, generally one would estimate $B_{t,j}$ from data by computing $b_{t,j}$).
For technical reasons it is easier for us to work with $B_{t,j}$.
As was previously noted in the literature \cite[Remark 2.4]{FVRS22}, when the denoisers are well-behaved this is effectively without loss of generality because the iterates produced by using $b_{t,j}$ vs. $B_{t,j}$ are $o(1)$-close; we discuss this further in \pref{app:amp}.

\begin{definition}[Pseudo-Lipschitz Functions]
A function $\varphi: \R^t\rightarrow \R$ is called Pseudo Lipschitz of order $k$ (or $\PL(k)$) if 
\[|\varphi(x) - \varphi(y)| \le L\left(1 + \|x\|_2^{k-1} + \|y\|_2^{k-1}\right) \|x - y\|_2\]
for all $x, y\in \R^t$.
\end{definition}

Note that a function is Lipschitz exactly when it is $\PL(1)$, and a polynomial of degree $k$ lies in $\PL(k)$. By a slight abuse of notation, we will say that constants lie in $\PL(0)$.
\medskip

We will need information about the order statistics of the entries of our iterates, $x^{(t)}$.
When we run AMP with polynomial denoiser functions, each iterate $x^{(t)}$ is a symmetric (fixed by coordinate relabeling), vector-valued polynomial in the entries of $X$.
So each entry is a bounded-degree polynomial of independent subgaussian random variables. 

While the entries of $x^{(t)}$ are not independent, they are sufficiently close to independent that for simple functions $g:\R \to \R$, the average $\frac{1}{n}\sum_{i=1}^n g(x^{(t)}_i)$ concentrates fairly well around the expectation of $g$ on a polynomial of Gaussians.
The same is true when the denoiser functions are Lipschitz.
This fact is known as ``state evolution'' in the AMP literature.
In the next corollary, we state a useful consequence that will allow us to control the order statistics of our iterates.
\begin{corollary}
\torestate{
\label{cor:amp-conditioning}
	Suppose that $f: \R^{t+1} \rightarrow \R$ is $\PL(k)$ for $k\ge 0$ and $g: \R^{t+1}\rightarrow \R$ is $\PL(\ell)$ with $g(\vec 0) = 0$ and $\ell \ge 1$. 
Suppose $\vec{x} = x^{(t)}$ is an AMP iterate resulting from the application of Pseudo Lipschitz denoisers on input $X$ a symmetric matrix with i.i.d. $\frac{O(1)}{\sqrt{n}}$-subgaussian entries having mean $0$ and variance $\frac 1n$. 
Furthermore, let $C > 0$ be a constant (possibly depending on $t$).  
Then, the following hold:
	\begin{itemize}
	\item For any $r\gg \max(t,k)$,
	\[\plim_{n\rightarrow\infty} \frac 1n \sum_{i=1}^{n}f(\vec x_i)^2 \ind[g(\vec x_i)^2 > \theta] \le \frac 1{\theta^{r}}\cdot C^{2r}\cdot (3\ell r)^{\ell r}.\]
	
	\item For every $I\subseteq [n]$ with $|I| \le \eps n$,
\[\plim_{n\rightarrow\infty}\frac 1n \sum_{i\in I}f(\vec x_i)^2 \le C\eps\log^{k}\frac 1\eps.\]
	\end{itemize}
}
\end{corollary}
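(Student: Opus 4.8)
My plan relies on a single external ingredient, the AMP state evolution principle: for every pseudo-Lipschitz test function $\psi:\R^{t+1}\to\R$ of finite order,
\[
\plim_{n\to\infty}\tfrac1n\sum_{i=1}^{n}\psi(\vec x_i)=\E[\psi(\vec Z)],
\]
where $\vec Z=(Z_t,\dots,Z_0)$ is the limiting state-evolution vector, whose entries are (sub)Gaussian with a parameter $\tau$ depending only on $t$ and $\calF$; in particular $\E\|\vec Z\|_2^{2p}\le (c_t\,p)^{p}$ for all $p\ge1$, with $c_t=O(1)$ depending only on $t$. Given this, both items reduce to estimates of $\E[\psi(\vec Z)]$ for an explicit polynomial test function $\psi$, which I would control by pairing the pseudo-Lipschitz growth of $f$ and $g$ with the Gaussian moment bound above. (Throughout I assume $\eps$ is below a constant depending on $t,k$, which is the regime of interest.)

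Turning to the first item, I would linearize the indicator: since $g(\vec x_i)^2>\theta$ forces $(g(\vec x_i)^2/\theta)^{r}\ge1$, for any $r\ge0$ there is the sure pointwise bound $f(\vec x_i)^2\,\ind[g(\vec x_i)^2>\theta]\le\theta^{-r}f(\vec x_i)^{2}g(\vec x_i)^{2r}$. The function $\psi:=f^2g^{2r}$ is $\PL(2k+2r\ell)$ (orders add under products), so state evolution gives $\plim_n\tfrac1n\sum_i\psi(\vec x_i)=\E[\psi(\vec Z)]$. Using $|f(z)|\le C_f(1+\|z\|_2^{k})$ and, crucially exploiting $g(\vec 0)=0$, $|g(z)|\le C_g(1+\|z\|_2^{\ell-1})\|z\|_2$, one obtains $f(z)^2g(z)^{2r}\le (C_fC_g)^{O(r)}(1+\|z\|_2^{2k+2r\ell})$; the Gaussian moment bound with $p=k+r\ell$ then yields $\E[\psi(\vec Z)]\le C^{O(r)}(k+r\ell)^{k+r\ell}$. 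Once $r\gg\max(t,k)$, the exponent $k+r\ell$ equals $(1+o(1))\ell r$, the factor $C^{O(r)}$ together with every $t$- and $\PL$-dependent constant (including $c_t$) folds into a single factor $C^{2r}$, and $(k+r\ell)^{k+r\ell}\le(3\ell r)^{\ell r}$ (the constant $3$ leaves the needed room). Dividing by $\theta^{r}$ gives exactly the claimed bound.

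For the second item, the maximum over $|I|\le\eps n$ is attained at the $\eps n$ coordinates with largest $f(\vec x_i)^2$, and a short case split — on whether fewer or more than $\eps n$ coordinates satisfy $f(\vec x_i)^2>\theta$ — gives, for every $\theta\ge0$, the sure bound
\[
\max_{|I|\le\eps n}\tfrac1n\sum_{i\in I}f(\vec x_i)^{2}\;\le\;\tfrac1n\sum_{i=1}^{n}f(\vec x_i)^{2}\,\ind[f(\vec x_i)^2>\theta]\;+\;\eps\theta.
\]
I would bound the first term on the right via the first item applied with $g:=f-f(0)\in\PL(k)$, $\ell=k$, and threshold $\theta/4$ — legitimate once $\theta\gg f(0)^2$, since then $|f|>\sqrt\theta$ implies $|f-f(0)|>\sqrt\theta/2$ — obtaining $\plim_n(\cdot)\le(4/\theta)^{r}C^{2r}(3kr)^{kr}$. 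Choosing $r=\lceil\log\tfrac1\eps\rceil$ (so that $r\gg\max(t,k)$, using smallness of $\eps$) and $\theta=\alpha(3kr)^{k}$ for a constant $\alpha$ large in terms of the first item's $C$, the first term becomes $(4C^2/\alpha)^{r}\le\eps$ and the second is $\eps\theta=\alpha(3k)^{k}\,\eps\log^{k}\tfrac1\eps$, giving the desired $O(\eps\log^{k}\tfrac1\eps)$ with a constant depending only on $t,k$.

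The one genuinely delicate point is the coordination of the two scales in the second item: taking $r\approx\log\tfrac1\eps$ produces the a priori alarming factor $(4C^2/\alpha)^{r}=\eps^{-\Theta(\log(C^2/\alpha))}$, and one must make the threshold $\theta$ polynomially large in $\log\tfrac1\eps$ — not constant — precisely so that this factor is driven below $\eps$ while $\eps\theta$ stays at the target order $\eps\log^{k}\tfrac1\eps$. In the first item the analogous (but milder) care is only in verifying that, for $r\gg\max(t,k)$, the factorial-type Gaussian moment $(k+r\ell)^{k+r\ell}$ and all the $\PL$- and dimension-dependent prefactors collapse exactly into the stated shape $\theta^{-r}C^{2r}(3\ell r)^{\ell r}$. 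Everything else is routine bookkeeping, given state evolution.
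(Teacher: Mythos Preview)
Your proposal is correct and follows essentially the same approach as the paper: linearize the indicator via $\ind[g^2>\theta]\le (g^2/\theta)^r$, apply state evolution to reduce to a Gaussian moment, bound that moment using the $\PL$ growth of $f\cdot g^r$ together with $\E\|\vec Z\|^{2p}\le (c_t p)^p$, and for the second item split at a threshold $\theta\asymp\log^{k}(1/\eps)$ with $r\asymp\log(1/\eps)$ and invoke the first item with $g=f-f(0)$. The only cosmetic difference is that the paper first passes the indicator through state evolution (justified by sandwiching $\ind[\cdot>\theta]$ between Lipschitz ramps) and then applies the Markov-type bound inside the Gaussian expectation, whereas you apply the pointwise bound before invoking state evolution; your order is slightly cleaner since it sidesteps the sandwich argument entirely.
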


We prove this corollary in~\pref{app:amp}.

Sometimes we will use the phrase ``Almost-Triangle Inequality'' to refer to the inequality $(a + b)^2 \le 2a^2 + 2 b^2$.

\section{Making AMP robust to principal minor corruptions}

In this section, we prove our main theorem.

\begin{theorem}[Main Theorem]
\label{thm:main-principal}
Let $\calF$ be an AMP iteration consisting of either Lipschitz or polynomial denoiser functions. 
Suppose that $X$ is a symmetric matrix with i.i.d. entries of mean $0$, variance $\frac{1}{n}$, and subgaussian parameter $\frac{O(1)}{\sqrt{n}}$. 
Let $\vAMP(X)$ denote the output of the $T$-step AMP algorithm on input $X$, and set $d$ to be the degree of $\vAMP(X)$ as a polynomial, or $1$ if the denoisers are Lipschitz.\footnote{This aligns with the \emph{pseudo-Lipschitz} degree of $\vAMP(X)$, which functions similarly to the degree as a polynomial.}
Then, with probability $1-o(1)$ over the choice of $X$, \pref{alg:robustamp} run on any $\eps$-principal minor corruption $Y$ of $X$, produces in time $O(\eps n^3 \log n)$ a vector $\hat v(Y)$ which satisfies
\[\left\|\hat v(Y) - \vAMP(X)\right\|_2^2 \le O\left(\eps \log^{d} \frac 1\eps\right)\cdot \|\vAMP(X)\|_2^2.\]
\end{theorem}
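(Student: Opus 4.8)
The plan is to make the strategy from the technical overview rigorous; it decomposes into a spectral pre-processing step and an inductive comparison of clipped AMP iterates on the cleaned matrix with the true iterates on $X$.

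\textbf{Spectral cleaning.} First I would record that with probability $1-o(1)$ we have $\|X\|_{\op}\le C_0$ for an absolute constant $C_0$ (standard concentration for symmetric matrices with i.i.d. $\tfrac{O(1)}{\sqrt n}$-subgaussian entries of variance $\tfrac1n$). Since $Y=X+E$ with $E$ supported on an $\eps n\times\eps n$ principal minor, $E$ has rank $\le\eps n$, so by Weyl's inequality $Y$ has at most $\eps n$ eigenvalues exceeding $C_0$; in particular the large-eigenvalue subspace of $Y$ has dimension $O(\eps n)$. I would then analyze the first phase of \pref{alg:robustamp} as a greedy spectral deletion rule: repeatedly power-iterate to find a top eigenvector of the current matrix, and if its eigenvalue exceeds a threshold $\Theta(1)$, delete the $\Theta(1)$ rows/columns of largest magnitude on that eigenvector, until the operator norm drops to $O(1)$. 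The key claim to prove is that this terminates after deleting only a set $S$ with $|S|=O(\eps n)$ total: because the offending matrix agrees with one of norm $\le C_0$ outside an $\eps n\times\eps n$ minor, its excess spectral mass is "supported on'' $O(\eps n)$ coordinates, and each deletion round makes quantitative progress toward removing it without the deletions spreading beyond $O(\eps n)$ coordinates. The output is $\hat Y$, obtained from $Y$ by zeroing the rows and columns in $S$, with $\|\hat Y\|_{\op}=O(1)$. Writing $M$ for the corrupted minor, I would then decompose $\hat Y - X = \tilde E + F$, where $\tilde E = E$ restricted to $(M\setminus S)\times(M\setminus S)$ (supported on $\le\eps n$ rows and columns) and $F = -X$ restricted to the rows and columns indexed by $S$ (supported on $|S|=O(\eps n)$ rows and columns), and note $\|\tilde E\|_{\op},\|F\|_{\op}\le\|\hat Y\|_{\op}+\|X\|_{\op}=O(1)$.

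\textbf{Clipped iteration and the induction.} On input $\hat Y$ I would run AMP with each denoiser pre-composed with $\clip$, truncating coordinates to magnitude $\tau=O(\polylog\tfrac1\eps)$, chosen so that by \pref{cor:amp-conditioning} only an $\eps$-fraction of coordinates of any clean iterate $x^{(t)}$ exceed $\tau$; the Onsager coefficients $B_{t,j}$ are taken from the clean state-evolution model (known to the algorithm) and applied to the clipped denoised iterates. The claim, proved by induction on $t$, is $\|y^{(t)}-x^{(t)}\|^2\le O(\eps\log^{d}\tfrac1\eps)\|x^{(t)}\|^2$, comparing after zero-padding $y^{(t)}$ on $S$; the base case is $x^{(0)}=\vec1=y^{(0)}$. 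For the inductive step I would split, as in \eqref{eq:terms}, into $\|\hat Y(f(\clip(y^{(t-1)}))-f(x^{(t-1)}))\|$ and $\|(\hat Y-X)f(x^{(t-1)})\|$, and absorb the Onsager difference $\sum_j B_{t,j}(f_{j-1}(\clip(y^{(j-1)}))-f_{j-1}(x^{(j-1)}))$ into the same bookkeeping as the first term. The first term is bounded using $\|\hat Y\|_{\op}=O(1)$, the pseudo-Lipschitz property of $f$, the $1$-Lipschitzness of $\clip$, the inductive hypothesis (applied after splitting $\clip(y^{(t-1)})-x^{(t-1)}$ via $\clip(x^{(t-1)})$), and the first bullet of \pref{cor:amp-conditioning} to control $\|\clip(x^{(t-1)})-x^{(t-1)}\|$ and, for polynomial denoisers, the contribution of the heavy coordinates of $x^{(t-1)}$ — this is where the $\log^{k}$ per iteration, hence the $\log^{d}$ with $d = k^T$, enters. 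For the second term I write $(\hat Y-X)f(x^{(t-1)}) = \tilde E f(x^{(t-1)}) + F f(x^{(t-1)})$; since $\tilde E$ lives on $\le\eps n$ columns, $\|\tilde E f(x^{(t-1)})\|\le\|\tilde E\|_{\op}\cdot(\max_{|I|\le\eps n}\sum_{i\in I}f(x^{(t-1)})_i^2)^{1/2}$, which is $\tilde{O}(\sqrt\eps)\|x^{(t-1)}\|$ by the second bullet of \pref{cor:amp-conditioning}; the column-part of $F$ is identical, while the row-part satisfies $F_{\mathrm{row}}f(x^{(t-1)}) = (-Xf(x^{(t-1)}))_S$, an $O(\eps n)$-sparse restriction of $-x^{(t)}-\Delta_{t-1}$ by the AMP recursion, whose order statistics are again controlled by \pref{cor:amp-conditioning} (using that $\Delta_{t-1}$ is a bounded combination of denoised earlier iterates). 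Combining gives $\|y^{(t)}-x^{(t)}\|\le\tilde{O}(\sqrt\eps)(\|x^{(t-1)}\|+\|x^{(t)}\|)$, and a final appeal to state evolution via \pref{cor:amp-conditioning} shows $\|x^{(t-1)}\| = \Theta(\|x^{(t)}\|)$, closing the induction; taking $\hat v(Y)=y^{(T)}$ (after the rounding map, itself pseudo-Lipschitz) yields the theorem.

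\textbf{Running time and the main obstacle.} The cost is dominated by the $O(\eps n)$ deletion rounds, each a power iteration costing $O(n^2\log n)$, for $O(\eps n^3\log n)$ total, plus $T$ clipped AMP steps at $O(n^2)$ each. I expect the principal difficulty to be the analysis of the spectral cleaning: proving that the greedy deletion rule provably drives $\|\cdot\|_{\op}$ down to a constant while removing only $O(\eps n)$ rows and columns — in particular ruling out that it chases a long tail of near-threshold eigenvalues or that the deleted coordinates spread out over $\omega(\eps n)$ indices — and then certifying the resulting clean structure $\hat Y - X = \tilde E + F$ with the stated support. The analytic half, though lengthy, should follow from repeated careful invocations of \pref{cor:amp-conditioning} exactly as sketched.
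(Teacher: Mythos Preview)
Your overall architecture and the inductive AMP comparison match the paper closely: the decomposition $\hat Y - X = \tilde E + F$, the row/column split of $F$, the trick of rewriting $F_{\mathrm{row}}\, f(x^{(t-1)})$ via the AMP recursion as a sparse restriction of $-x^{(t)} - \Delta$, and the repeated appeals to \pref{cor:amp-conditioning} are exactly what the paper does in Propositions~\ref{prop:clipping-preserves}--\ref{prop:pl-preserves}. (One cosmetic difference: the paper clips the iterate $y^{(t)}$ \emph{after} the full AMP update rather than pre-composing the denoiser with $\clip$; either convention works for the analysis.)

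The substantive gap is in the spectral cleaning, and it is precisely the obstacle you flag. You describe the first phase as deleting the $\Theta(1)$ rows/columns of \emph{largest magnitude} on the top eigenvector; the paper's \pref{alg:spectral-cleaning} instead \emph{randomly samples} a single index $i$ with probability $v_i^2$ and zeroes that row/column. This randomization is the missing idea. The paper proves that whenever $\|Y^{(t)}\|_{\op} > 5\alpha$ (with $\alpha = \E\|X\|_{\op}$), the top eigenvector $v$ satisfies $\sum_{i\in Q^{(t)}} v_i^2 \ge \tfrac12 - o(1)$, via the short chain $v^\top E^{(t)} v \ge \|E^{(t)}\|_{\op} - 2\alpha$ together with $v^\top E^{(t)} v \le \|E^{(t)}\|_{\op}\,\|v|_{Q^{(t)}}\|^2$. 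Hence each round removes a corrupted index with probability at least $\tfrac12$, and a binomial/stochastic-domination tail bound gives termination in at most $4\eps n$ rounds with high probability. Your deterministic rule does \emph{not} follow from the same claim: half the $\ell_2$ mass lying on $Q$ gives no guarantee that the individually largest coordinates lie in $Q$ (e.g.\ $v$ could be nearly uniform on $Q$ while having a few large spikes outside), so greedy deletion could in principle keep hitting uncorrupted indices. Your Weyl observation that $Y$ has at most $\eps n$ large eigenvalues is correct but, by Cauchy interlacing, bounds neither the number of deletions needed nor where they land. The randomized sampling is what makes the ``only $O(\eps n)$ deletions'' bound go through cleanly.
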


Our algorithm consists of a pre-processing step, followed by a ``robust'' simulation of AMP:
\begin{enumerate}
	\item In the pre-processing step, we spectrally clean $Y$ by removing rows and columns to produce a matrix $\hat Y$ with $\|\hat Y\|_{\op} = O(1)$.
	\item Then, we run AMP on $\hat Y$, but with the following modification: after each iteration, we clip the iterate (coordinate-wise) to ensure all coordinates have not-too-large an absolute value.
\end{enumerate}

The following definitions will help us to describe our algorithm.

\begin{definition}
For $\eps > 0$, define $\cutoff(\eps) = \sqrt{C_T\log \frac 1\eps}$ for an appropriately large $C_T$ depending on $T$, the total number of AMP iterations.\footnote{In practice, $C_T = 16$ is a reasonable value.}
	The ``$\eps$-clip'' of $y\in \R$ is now defined to be 
\[
\clip^\eps(y) = \begin{cases}
y & |y| \le \cutoff(\eps)\\
 \sign(y)\cdot \cutoff(\eps) & |y| > \cutoff(\eps)
\end{cases}
\]
\end{definition}

\begin{definition}[Matrix restriction]
	Given a matrix $Y\in \R^{n\times n}$, $\hat{Y}$ is an $\eps$-\emph{restriction} if there exists a set $S\subseteq [n]$ with $|S| \le \eps n$ such that zeroing out the rows and columns of $Y$ with indices in $S$ yields $\hat{Y}$.
\end{definition}

Pictorially, this is as follows:
\[Y = 
\begin{bmatrix}
 	Y_{S,S} & Y_{S,\overline S}\\
 	Y_{\overline S, S} & Y_{\overline S, \overline S}
 \end{bmatrix}
\longrightarrow 
 \hat Y = \begin{bmatrix}
 \mathbf 0_{S,S} & \mathbf 0_{S,\overline S}\\
 \mathbf 0_{\overline S, S} & Y_{\overline S, \overline S}	
 \end{bmatrix}.
\]

\begin{algorithmSELF}[Robust AMP]\label{alg:robustamp}
\textbf{Input: } A symmetric $n \times n$ matrix $Y$.

\noindent \textbf{Operation: }
\vspace{-0.2cm}
\begin{enumerate}[noitemsep]
	\item Compute a restriction $\hat{Y}$ of $Y$ satisfying $\|\hat Y\|_{\op} \le 5 \cdot \E[\|X\|_{\op}]$ using~\pref{alg:spectral-cleaning}.
	\item For $t = 1,\ldots,T$, set $y^{(t)}$ to be the clipped AMP iteration
	\[y^{(t)} = \clip^{\eps}\left(\hat{Y}f_t(y^{(t-1)},\ldots,y^{(0)}) - \sum_{j=1}^{t}B_{t,j}\cdot f_{j-1}(y^{(j-1)},\ldots,y^{(0)})\right).\]
\end{enumerate}

\noindent \textbf{Output: } The vector $\hat v = y^{(T)}$.
	
\end{algorithmSELF}

\pref{thm:main-principal} is a consequence of the following two lemmas, one for each step of \pref{alg:robustamp}.

\begin{lemma}[Efficient spectral cleaning]\torestate{
\label{lem:spectral-cleaning}
Suppose $X$ is a symmetric $n \times n$ matrix with i.i.d. entries of mean zero, variance $\frac{1}{n}$, and subgaussian parameter $\frac{O(1)}{\sqrt{n}}$.
With probability $1-o(1)$ over $X$, \pref{alg:spectral-cleaning} run on any $\eps$-principal minor corruption $Y$ of $X$ with threshold value $K = 5 \E[\|X\|_{\op}]$ outputs in time $O(\eps n^3 \log n)$ a matrix $\hat Y$ which is a $4\eps$-restriction of $Y$ and satisfies $\|\hat Y\|_{\op} = O(1)$.
}
\end{lemma}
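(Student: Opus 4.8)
The plan is to first extract two deterministic consequences of the concentration of $\|X\|_\op$, and then to analyze \pref{alg:spectral-cleaning} --- which I picture as iteratively computing (by a few steps of power iteration) a near-top eigenpair of the current restriction of $Y$, deleting a small set of rows/columns that carry a constant fraction of that eigenvector's mass, and stopping once the spectral norm drops below $K=5\,\E[\|X\|_\op]$.

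\emph{Concentration step.}
First I would record that with probability $1-o(1)$ we have $\|X\|_\op\le\tfrac65\,\E[\|X\|_\op]<\tfrac13 K$; this is standard, since $M\mapsto\|M\|_\op$ is $O(1/\sqrt n)$-Lipschitz in the entries, the entries are $O(1/\sqrt n)$-subgaussian, and $\E[\|X\|_\op]=\Theta(1)$. Condition on this event. Two consequences are used throughout: (a) every principal submatrix $X_{U,U}=P_UXP_U$ satisfies $\|X_{U,U}\|_\op\le\|X\|_\op<\tfrac13K$, where $P_U$ is the coordinate projector; (b) for any $\eps$-principal minor corruption $Y$ of $X$ with (unknown) corruption support $S_0$, $|S_0|=\eps n$, the block $Y_{\overline{S_0},\overline{S_0}}=X_{\overline{S_0},\overline{S_0}}$ has spectral norm $<\tfrac13K$, so by Cauchy interlacing $\lambda_{1+\eps n}(Y)<K$ and $\lambda_{n-\eps n}(Y)>-K$, i.e.\ $Y$ has at most $2\eps n$ eigenvalues of absolute value exceeding $K$; and the same holds for every restriction $P_{\overline S}YP_{\overline S}$, since deleting a coordinate cannot increase the number of eigenvalues above $K$ or below $-K$ (interlacing again). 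Only these structural facts about $S_0$ are used.

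\emph{Structural lemma for the loop.}
The key claim is: whenever the current restriction $M:=P_{\overline S}YP_{\overline S}$ has $\|M\|_\op>K$, the top unit eigenvector $v$ places at least a constant fraction of its mass on the remaining corrupted coordinates $T:=S_0\cap\overline S$. To see this, set $M':=P_{\overline S\setminus S_0}XP_{\overline S\setminus S_0}$, so $\|M'\|_\op<\tfrac13K$ and $M-M'$ is supported on the rows and columns indexed by $T$. Using $Mv=\lambda v$ with $|\lambda|=\|M\|_\op$,
\[
|\lambda|=\bigl|v^\top Mv\bigr|\le\|M'\|_\op+\bigl|2\lambda\|v_T\|_2^2-(\ind_T\circ v)^\top M(\ind_T\circ v)\bigr|\le\tfrac13K+3|\lambda|\,\|v_T\|_2^2,
\]
and since $|\lambda|>K$ this rearranges to $\|v_T\|_2^2\ge\tfrac29=:c$. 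In particular $T\neq\emptyset$ and some coordinate of $T$ carries eigenvector mass $\ge c/|T|\ge c/(\eps n)$, so the corrupted coordinates are always ``spectrally visible.''

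\emph{Putting it together, and the hard part.}
Given the structural lemma, at termination $\|\hat Y\|_\op\le K=O(1)$ by the stopping rule, so it remains to bound the number of deleted rows/columns by $4\eps n$ and the running time by $O(\eps n^3\log n)$; the latter is immediate once the number of iterations is $O(\eps n)$, since each iteration is one power-iteration estimate of the top eigenvalue of an $n\times n$ matrix to within a constant factor, costing $O(n^2\log n)$. The main obstacle is the $4\eps n$ deletion bound: the coordinate of \emph{largest} eigenvector mass need not lie in the unknown set $S_0$, and for a moderate-spectral-norm, nearly-low-rank corruption whose top eigenvector is partly supported on clean coordinates, naive greedy could waste many deletions. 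I would control this either by tracking the (non-increasing, by interlacing) count of eigenvalues above $K$ together with the fact that each coordinate of $T$ carries $\Omega(1/(\eps n))$ of the top eigenvector's mass, so heavy clean coordinates are in short supply relative to the $\le 2\eps n$ large eigenvalues; or by deleting in each round a batch of coordinates whose size is tied to the number of corrupted coordinates it is guaranteed to eliminate, so that a charging argument caps the total at $4\eps n$. Pinning down the constant $4$ (rather than an unspecified $O(1)$) also requires care with the constants in the concentration of $\|X\|_\op$ and with absorbing the inverse-polynomial error of the power method (e.g.\ by running the loop with an inner threshold slightly below $K$ and a final check against $K$).
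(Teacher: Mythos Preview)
Your structural lemma is correct and close in spirit to the paper's, though you take a slightly roundabout route. Because $Y-X$ is supported on the \emph{principal minor} $S_0\times S_0$ (not merely on the rows and columns indexed by $S_0$), the same is true of $E^{(t)}:=Y^{(t)}-X^{(t)}$ on $Q^{(t)}\times Q^{(t)}$, so one has $v^\top E^{(t)} v = \tilde v^\top E^{(t)}\tilde v$ with $\tilde v=v|_{Q^{(t)}}$ directly. Combining $v^\top E^{(t)} v\ge \|E^{(t)}\|_\op-2\alpha$ with $v^\top E^{(t)} v\le \|E^{(t)}\|_\op\|\tilde v\|^2$ and $\|E^{(t)}\|_\op\ge 4\alpha$ yields $\|\tilde v\|^2\ge \tfrac12-o(1)$ rather than your $\tfrac29$. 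The sharper constant matters below.

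The genuine gap is in what you call the ``hard part.'' You are analyzing a deterministic deletion rule and correctly observe that greedy removal of the heaviest coordinate can waste steps on clean indices; your proposed fixes (tracking the interlacing count of large eigenvalues, or a batch-charging scheme) are left vague, and it is not clear either one yields the constant $4$. The paper's algorithm sidesteps this entirely by \emph{randomizing}: at each iteration it samples a single index $i$ with probability $v_i^2$ and zeros out that row and column. The structural lemma then says each step hits $Q^{(t)}$ with probability at least $\tfrac12-o(1)$, so the number of corrupted indices removed in $4\eps n$ steps stochastically dominates $\mathsf{Binomial}(4\eps n,\tfrac12-o(1))$, which exceeds $|S_0|=\eps n$ except with exponentially small probability. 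Once all of $S_0$ is removed the remaining matrix is a principal submatrix of $X$ with norm at most $\alpha+o(1)<K$, so the loop must already have terminated. This gives the $4\eps$-restriction bound and the $O(\eps n)$ iteration count (hence the $O(\eps n^3\log n)$ runtime) in one stroke; the randomized sampling is the missing idea, and with it your eigenvalue-counting and charging arguments are unnecessary.
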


\begin{lemma}[Success of AMP on restrictions]\torestate{
\label{lem:amp-success-restriction}
Suppose $X$ is an $n\times n$ matrix with i.i.d. entries of mean zero, variance $\frac{1}{n}$ and subgaussian parameter $\frac{O(1)}{\sqrt{n}}$-subgaussian entries.
Suppose that $Y$ is an $\eps$-principal minor corruption of $X$ and $\hat{Y}$ is a $4\eps$-restriction of $Y$ with $\|\hat{Y}\|_{\op} = O(1)$. 
Then the clipped AMP iteration from \pref{alg:robustamp} on $\hat{Y}$ produces a vector $\hat{v}$ such that $\|\hat{v} - \vAMP(X)\|_2^2 \le O(\eps \log^d \frac 1\eps) \|\vAMP(X)\|_2^2$ with probability $1 - o_n(1)$ over the choice of $X$.
}	
\end{lemma}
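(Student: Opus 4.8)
The plan is to prove by induction on $t$ that the clipped iterates $y^{(t)}$ on $\hat Y$ stay close to the true AMP iterates $x^{(t)} := \vAMP$-iterate on $X$, in the sense that $\|y^{(t)} - x^{(t)}\|_2 \le h(t,\eps) \cdot \|x^{(t)}\|_2$ for some function $h(t,\eps) = O(\sqrt{\eps}\,\polylog\frac1\eps)$ whose polylog degree tracks the pseudo-Lipschitz degree of the iterate, and whose $t$-dependence is absorbed into constants. The base case $t=0$ is immediate since $x^{(0)} = \vec 1 = y^{(0)}$ and $\vec 1$ is unaffected by clipping (for $n$ large, $\cutoff(\eps) > 1$). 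For the inductive step, I would write $y^{(t)} - x^{(t)}$ as in \pref{eq:terms} of the technical overview: peel off the outer $\clip^\eps$ (which is $1$-Lipschitz and fixes $x^{(t)}$ up to the tail mass $\|\clip^\eps(x^{(t)}) - x^{(t)}\|$, controlled by the first bullet of \pref{cor:amp-conditioning} with $\theta = \cutoff(\eps)^2$), then split into $\|\hat Y\|_{\op}\cdot\|f_t(\clip^\eps y^{(t-1)},\ldots) - f_t(x^{(t-1)},\ldots)\|$ plus the ``noise'' term $\|(\hat Y - X)f_t(x^{(t-1)},\ldots)\| $ plus the Onsager difference $\|\sum_j B_{t,j}(f_{j-1}(y^{(\le j-1)}) - f_{j-1}(x^{(\le j-1)}))\|$, the last of which is handled exactly like the first (the $B_{t,j}$ are absolute constants).

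The denoiser-difference term is where the pseudo-Lipschitz hypothesis and the clipping do their work. For Lipschitz denoisers this term is simply $\le L\sum_{s<t}\|\clip^\eps(y^{(s)}) - x^{(s)}\|$, and since $\clip^\eps$ is $1$-Lipschitz this is $\le L\sum_{s<t}(\|y^{(s)} - x^{(s)}\| + \|\clip^\eps(x^{(s)}) - x^{(s)}\|)$, both pieces controlled by induction and \pref{cor:amp-conditioning} respectively. For degree-$\le k$ polynomial denoisers I would mimic the $f(x)=x^2$ computation of \pref{eq:infty}–\pref{eq:clip}: factor $f(a)-f(b)$ as a sum of terms each of the form (product of $\le k-1$ coordinates of $a$'s and $b$'s) $\times (a_i - b_i)$, bound the clipped factors by $\cutoff(\eps)^{k-1} = (C_T\log\frac1\eps)^{(k-1)/2}$ in $\ell_\infty$, and for the un-clipped $x^{(s)}$ factors split each coordinate according to whether $|x^{(s)}_i| \le \cutoff(\eps)$ (bounded, same argument) or $> \cutoff(\eps)$ (apply the first bullet of \pref{cor:amp-conditioning} to see these coordinates contribute $\ll \sqrt\eps\|x^{(s)}\|$ in $\ell_2$). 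This is the step I expect to be the main technical obstacle: carefully organizing the multilinear expansion of $f_t(a,\ldots) - f_t(b,\ldots)$ for a multivariate polynomial denoiser, tracking which factors get clipped vs. which are genuine clean iterates, and verifying the polylog exponent comes out to $d = k^T$ after $T$ compositions (each layer of the AMP recursion multiplies the polynomial degree, hence the tower).

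For the noise term $\|(\hat Y - X)f_t(x^{(t-1)},\ldots)\|$ I would follow the decomposition in the overview: $\hat Y - X = E + F$ where $E$ is supported on the $\eps n \times \eps n$ corrupted principal minor and $F = -X$ restricted to the $O(\eps n)$ extra rows/columns deleted by \pref{alg:spectral-cleaning}. Writing $F = F_1 + F_2$ (columns resp.\ rows), the pieces $E f(x^{(t-1)})$ and $F_1 f(x^{(t-1)})$ are bounded by $\|E\|_{\op}$ (resp.\ $\|F_1\|_{\op}$) $\le \|\hat Y\|_{\op} + \|X\|_{\op} = O(1)$ times $\max_{|I|\le O(\eps n)}\sqrt{\sum_{i\in I} f(x^{(t-1)}_i)^2}$, and the second bullet of \pref{cor:amp-conditioning} (with $f$ of pseudo-Lipschitz degree $k$) gives this is $O(\sqrt{\eps}\log^{k/2}\frac1\eps)$; note $f(x^{(t-1)})$ is itself an AMP-iterate-like polynomial, so \pref{cor:amp-conditioning} applies to it. The piece $F_2 f(x^{(t-1)}) = (-Xf(x^{(t-1)}))_T = -(x^{(t)} + \Delta_{t-1})_T$ restricted to $|T| \le O(\eps n)$ coordinates is an $O(\eps n)$-sparse restriction of a vector which is itself (close to) an AMP iterate, so the same second bullet of \pref{cor:amp-conditioning} bounds it by $O(\sqrt\eps\,\polylog\frac1\eps)\|x^{(t)}\|$. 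Finally, to close the induction and convert $\|x^{(t-1)}\|$-bounds into $\|x^{(t)}\|$-bounds, I invoke \pref{cor:amp-conditioning} once more (with $f \equiv$ the identity-like choices) to conclude $\|x^{(t-1)}\|_2 = \Theta(\|x^{(t)}\|_2) = \Theta(\sqrt n)\cdot(1\pm o(1))$, so all the error terms are uniformly $O(\sqrt\eps\,\polylog\frac1\eps)\|x^{(t)}\|$; taking $t = T$ and squaring gives the claimed $O(\eps\log^d\frac1\eps)\|\vAMP(X)\|^2$. The $o_n(1)$ failure probability is just the union over the finitely many ($O(T)$, $O(T^2)$) applications of \pref{cor:amp-conditioning}, each of which holds with probability $1-o_n(1)$.
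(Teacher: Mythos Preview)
Your proposal is correct and matches the paper's proof almost step-for-step: induction on $t$, peel off the outer $\clip^\eps$ via $1$-Lipschitzness plus the tail bound from \pref{cor:amp-conditioning}, decompose $\hat Y - X = E + F_1 + F_2$ with the row-supported piece handled by rewriting it as a restriction of $-Xf_t(x) = -x^{(t)} - \sum_j B_{t,j}f_{j-1}(x)$, and close the induction using $\|x^{(t)}\|_2 = \Theta(\sqrt n)$.

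The one place you diverge is the denoiser-difference term: you propose a multilinear expansion of $f_t(a)-f_t(b)$ (telescoping monomials, then casing on clipped vs.\ unclipped factors), which is what the technical overview sketches for $f(x)=x^2$. The paper's actual proof (\pref{prop:pl-preserves}) instead applies the pseudo-Lipschitz inequality directly, $|f_t(y_i)-f_t(x_i)| \le L(1+\|y_i\|^{d_t-1}+\|x_i\|^{d_t-1})\|y_i-x_i\|$, then bounds $\|y_i\|_\infty$ uniformly by $\cutoff(\eps)$ (since all previous $y^{(s)}$ are already clipped) and splits the $\|x_i\|^{2(d_t-1)}$ weight according to whether $\|x_i\|^2$ exceeds $C_T t\log\frac1\eps$. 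This buys a cleaner bookkeeping that treats the Lipschitz and polynomial cases uniformly and avoids the combinatorics of the monomial telescoping you flagged as the main obstacle; your route would also work but is more laborious.
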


When combined, these two lemmas immediately imply~\pref{thm:main-principal}.

\subsection{Spectral cleaning}

The goal of this section is to prove~\pref{lem:spectral-cleaning}.
Here we present the algorithm to construct $\hat Y$ which is a $4\eps$-restriction of $Y$ and has $\|\hat{Y}\|_{\op} = O(1)$.

\begin{algorithmSELF}[Spectral cleaning of principal minor corruptions]
\label{alg:spectral-cleaning}	

\textbf{Input: } A symmetric $n \times n$ matrix $Y$, and a threshold value $K \ge 0$.

\noindent\textbf{Operation: } 
\begin{enumerate}[noitemsep]	
	\item Let $\hat{Y} = Y$.

	\item While $\|\hat{Y}\|_{\op} > K$:
	\begin{enumerate}
		\item Let $v$ be the eigenvector of $\hat Y$ with eigenvalue of largest magnitude.
		\item Sample $i\in [n]$ with probability $v_i^2$.
		\item Zero out the $i$-th row and column of $\hat Y$.
	\end{enumerate}
\end{enumerate}

\noindent \textbf{Output: } Matrix $\hat{Y}$.
\end{algorithmSELF}

Note that critically we do not require that we exactly recover the corrupted rows and columns: all that matters is that we remove the indices that contribute the most to the spectral corruption. 

\begin{proof}[Proof of~\protect{\pref{lem:spectral-cleaning}}]
Certainly, the algorithm terminates since no index can be sampled more than once. 
We will show that with high probability, $O(\eps) n$ indices are removed.
The runtime bound can be deduced from noting that we have to run power iteration at most once per index removal, each run taking $O(n^2\log n)$ time. 

For convenience, let $\alpha = \E[\|X\|_{\op}]$, and recall our threshold is $K = 5\alpha$.
Since $X$ has independent subgaussian entries, with high probability $\|X\|_{\op} \le \alpha +o(1)$.
Let $Q$ denote the set of corrupted indices in $Y$. 
Furthermore, let $Y^{(0)} = Y, Y^{(1)}, \ldots, Y^{(t)}$ denote the matrix $\hat{Y}$ after each iteration of the while loop.
Similarly define $E^{(t)}$ and $Q^{(t)}$ (the non-zeroed out corrupted indices).

Note that if all indices in $Q$ are removed, the while loop will terminate (it can terminate in other instances, but this is just one stopping condition).
We show that with high probability we will reach $\|Y^{(t)}\|_{\op} \le 5\alpha$ within $4\eps n$ iterations (an thus $\hat Y$ is a $4\eps$-restriction of $Y$) using a win-win analysis: either we reach a small operator norm before removing all of $Q$ or we remove all of $Q$ (which implies the remaining matrix has norm $\le \alpha + o(1)$, because it is a principal minor of $X$).
In particular, the crux of the argument is the following:

\begin{claim}
Let $v$ be the top eigenvector of $Y^{(t)}$ and suppose $\|Y^{(t)}\|_{\op} > 5\alpha$. Then with high probability over $X$,
\[\sum_{i\in Q^{(t)}}v_i^2 \ge \frac 12 - o(1).\]
\end{claim}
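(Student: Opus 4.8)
The plan is to argue by contradiction: suppose the top eigenvector $v$ of $Y^{(t)}$ puts less than $\tfrac12 - o(1)$ of its $\ell_2$-mass on the uncleaned corrupted indices $Q^{(t)}$. I will write $Y^{(t)} = \hat{X}^{(t)} + E^{(t)}$, where $\hat{X}^{(t)}$ is the corresponding restriction of the \emph{clean} matrix $X$ (obtained by zeroing out the same rows/columns that have been removed so far), and $E^{(t)}$ is the corruption, which is supported on the principal minor indexed by $Q^{(t)}$. The key point is that $\hat X^{(t)}$ is a principal submatrix of $X$ (padded with zeros), hence $\|\hat X^{(t)}\|_{\op} \le \|X\|_{\op} \le \alpha + o(1)$ with high probability. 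So for the top eigenvector $v$ (a unit vector) with eigenvalue of magnitude $> 5\alpha$,
\[
5\alpha < \left| v^\top Y^{(t)} v \right| \le \left| v^\top \hat X^{(t)} v \right| + \left| v^\top E^{(t)} v \right| \le (\alpha + o(1)) + \left| v^\top E^{(t)} v \right|,
\]
which forces $|v^\top E^{(t)} v| \ge 4\alpha - o(1)$.

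Next I would exploit that $E^{(t)}$ is supported on the $Q^{(t)} \times Q^{(t)}$ principal minor. Let $P$ be the coordinate projection onto $Q^{(t)}$ and write $w = Pv$, so $v^\top E^{(t)} v = w^\top E^{(t)} w$ and therefore $|w^\top E^{(t)} w| \le \|E^{(t)}\|_{\op} \cdot \|w\|_2^2$. Since $E^{(t)} = Y^{(t)} - \hat X^{(t)}$ and both terms have operator norm bounded (the first by $\|Y\|_{\op}$, but more usefully: $E^{(t)}$ is a submatrix-supported piece of $Y - X$, and $Y - X$ is supported on the original $\eps n \times \eps n$ principal minor, so $\|E^{(t)}\|_{\op} \le \|Y - X\|_{\op}$; alternatively bound $\|E^{(t)}\|_{\op} \le \|Y^{(t)}\|_{\op} + \|\hat X^{(t)}\|_{\op}$). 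The cleanest route is: during the first $4\eps n$ iterations we always have $\|Y^{(t)}\|_{\op} \le \|Y\|_{\op} \le \|X\|_{\op} + \|E\|_{\op}$ — but actually this could be large if the adversary's perturbation is large. The safe bound is $\|E^{(t)}\|_{\op} \le \|Y^{(t)}\|_{\op} + \|\hat X^{(t)}\|_{\op} \le \|Y^{(t)}\|_{\op} + \alpha + o(1)$, combined with the fact that the algorithm only continues while $\|Y^{(t)}\|_{\op} > 5\alpha$; but this gives an upper bound that scales with $\|Y^{(t)}\|_{\op}$, which does not immediately close the loop. The correct observation is instead that $\|E^{(t)}\|_{\op}$ is \emph{controlled by} $\|Y^{(t)}\|_{\op}$ only through the projection onto $Q^{(t)}$: we have $|v^\top E^{(t)} v| = |w^\top (Y^{(t)})_{Q^{(t)}, Q^{(t)}} w - w^\top \hat X^{(t)}_{Q^{(t)},Q^{(t)}} w| \le \|(Y^{(t)})_{Q^{(t)},Q^{(t)}}\|_{\op} \|w\|_2^2 + \alpha \|w\|_2^2 \le (\|Y^{(t)}\|_{\op} + \alpha)\|w\|_2^2$.

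Now I combine the two displays: $(\|Y^{(t)}\|_{\op} + \alpha)\|w\|_2^2 \ge |v^\top E^{(t)} v| \ge 4\alpha - o(1)$. On the other hand, decomposing $v^\top Y^{(t)} v = w^\top Y^{(t)} w + (\text{cross and complement terms})$ and using that off the $Q^{(t)}$ block $Y^{(t)}$ agrees with $\hat X^{(t)}$, one gets $\|Y^{(t)}\|_{\op} = |v^\top Y^{(t)} v| \le \|Y^{(t)}\|_{\op}\|w\|_2^2 + 2(\alpha+o(1))\|w\|_2 + (\alpha + o(1))$ (bounding the $Q^{(t)}$-block contribution by $\|Y^{(t)}\|_{\op}\|w\|^2$, the two cross terms $v_{Q^c}^\top \hat X^{(t)} v_{Q}$ by $\|\hat X^{(t)}\|_{\op}\|w\|_2$, and the complement block by $\alpha$). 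Rearranging, $\|Y^{(t)}\|_{\op}(1 - \|w\|_2^2) \le 2\alpha\|w\|_2 + \alpha + o(1)$, and since $\|Y^{(t)}\|_{\op} > 5\alpha$ this yields $5(1 - \|w\|_2^2) < 2\|w\|_2 + 1 + o(1)$, i.e. $5\|w\|_2^2 + 2\|w\|_2 - 4 > -o(1)$, whose positive root is $\|w\|_2 > \tfrac{-2 + \sqrt{4 + 80}}{10} - o(1) = \tfrac{-1 + \sqrt{21}}{5} - o(1) \approx 0.717 - o(1)$, so $\|w\|_2^2 \ge \tfrac12 - o(1)$. Since $\|w\|_2^2 = \sum_{i \in Q^{(t)}} v_i^2$, this is exactly the claim (the constant is comfortably above $\tfrac12$; the statement as written with $\tfrac12 - o(1)$ follows).

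\textbf{Main obstacle.} The delicate point is the bookkeeping for $\|E^{(t)}\|_{\op}$ and making sure every inequality uses only that $\hat X^{(t)}$ is a zero-padded principal submatrix of $X$ (hence $\|\hat X^{(t)}\|_{\op} \le \|X\|_{\op} = \alpha + o(1)$ w.h.p., uniformly over \emph{all} restrictions simultaneously, which is automatic since any principal submatrix of $X$ has operator norm at most that of $X$), and never assumes a bound on $\|E\|_{\op}$ that the adversary could violate. The trick that makes it work is to project $v$ onto $Q^{(t)}$ and absorb the corruption entirely into the $Q^{(t)} \times Q^{(t)}$ block, so that the large quantity $\|Y^{(t)}\|_{\op}$ appears on both sides and can be divided out. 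The "with high probability over $X$" is used only for $\|X\|_{\op} \le \alpha + o(1)$, a single event independent of the algorithm's random choices.
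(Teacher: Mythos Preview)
Your argument is correct, though the writeup meanders through a false start before landing on the block decomposition that actually works. Your final inequality $\|Y^{(t)}\|_{\op}(1-\|w\|_2^2)\le 2\alpha\|w\|_2+\alpha+o(1)$ is valid (off the $Q^{(t)}\times Q^{(t)}$ block, $Y^{(t)}$ coincides with $\hat X^{(t)}$, whose norm is $\le\alpha+o(1)$ uniformly over all restrictions), and solving the quadratic gives $\|w\|_2^2\gtrsim 0.513>\tfrac12$. The paper takes a slightly different and cleaner route: rather than decomposing $v^\top Y^{(t)} v$ and letting $\|Y^{(t)}\|_{\op}$ appear on both sides, it lower-bounds $v^\top E^{(t)} v\ge \|Y^{(t)}\|_{\op}-\|X^{(t)}\|_{\op}\ge \|E^{(t)}\|_{\op}-2\alpha-o(1)$ and upper-bounds $v^\top E^{(t)} v\le \|E^{(t)}\|_{\op}\,\|\tilde v\|_2^2$, so that the (potentially unbounded) $\|E^{(t)}\|_{\op}$ cancels upon dividing, yielding $\|\tilde v\|_2^2\ge 1-2\alpha/\|E^{(t)}\|_{\op}\ge\tfrac12$ directly. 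Both arguments pivot on the same idea you flag in your ``main obstacle'': never bound the adversarial quantity from above, instead arrange for it to cancel. Your version cancels $\|Y^{(t)}\|_{\op}$ via a quadratic; the paper cancels $\|E^{(t)}\|_{\op}$ via a ratio, avoiding the quadratic and hitting $\tfrac12$ exactly.
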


Note that this claim is equivalent to saying that at each iteration of the while loop there is at least a $\frac 12$ probability of removing some index from $Q$.

\begin{proof}[Proof of Claim]
Suppose that $v^\top Y^{(t)} v > 5\alpha$ (as opposed to $v^\top Y^{(t)} v < -5\alpha$). 
Let $\tilde{v}$ be $v$ such that all indices outside of $Q^{(t)}$ are set to zero. 
Our goal is to lower bound $\|\tilde{v}\|_2^2$. 
Notice $v^\top E^{(t)} v = \tilde{v}^{\top}E^{(t)}\tilde{v}$ by definition. 
Since $v$ is the top eigenvector of $Y^{(t)}$,
\[v^\top E^{(t)} v = v^\top (Y^{(t)} - X^{(t)}) v \ge \|Y^{(t)}\|_{\op} - \|X^{(t)}\|_{\op} \ge (\|E^{(t)}\|_{\op} - \|X^{(t)}\|_{\op}) - \|X^{(t)}\|_{\op} \ge \|E^{(t)}\|_{\op} - 2\alpha - o(1)\]
where in the second to last step we used that $Y^{(t)} = X^{(t)} + E^{(t)}$ and in the last step we used that, since $X^{(t)}$ is a principal minor of $X$, $\|X^{(t)}\|_{\op} \le \|X\|_{\op} \le \alpha +o(1)$ w.h.p.

However, note that $v^\top E^{(t)} v \le \|E^{(t)}\|_{\op} \cdot \|\tilde v\|_2^2$, which implies that $\|\tilde{v}\|_2^2 \ge 1 - \frac{2\alpha + o(1)}{\|E\|_{\op}}$. 
Since $\|E^{(t)}\|_{\op} \ge \|Y^{(t)}\|_{\op} - \|X^{(t)}\|_{\op} \ge 5\alpha - (\alpha-o(1))$ by assumption, this implies that $\|\tilde{v}\|_2^2 \ge \frac 12 - o(1)$.
The proof of the case $v^\top Y v < -5\alpha$ is identical up to a change of sign.
\end{proof}

To prove that our loop terminates in $4\eps n$ steps with high probability, define the stopping time $\tau = \min\{t\ge 0: \|Y^{(t)}\|_{\op} \le 5\alpha\}$.
Now, let $I_t$ denote the indicator of whether the index removed between $Y^{(t)}$ and $Y^{(t+1)}$ was in $Q$, and note that each $I_t$ independently stochastically dominates a $B_t\sim \Ber(\frac 12 - o(1))$.
 Suppose that $\tau \ge 4\eps n$. 
 Then, it follows that
\[\sum_{j=1}^{4\eps n}B_{t-1} \le \eps n,\]
which happens with exponentially small probability (this is equivalent to asking for the probability that $\mathsf{Binomial}(4\eps n, \frac 12 - o(1)) \le \eps n$). Together, this implies that $\tau \le 4\eps n$ with high probability.\qedhere

\end{proof}

\subsection{Analysis of clipped AMP on spectrally cleaned input}

In this section we will prove~\pref{lem:amp-success-restriction}.
To begin, we examine the effect of a combination of principal minor and restriction corruptions. 
Suppose $Y$ is an $\eps$-principal minor corruption of $X$, and suppose $\hat Y$ is a $4\eps$-restriction of $Y$.
Let $S$ denote the set of rows in the support of $Y-X$, and let $T$ denote the set of rows in the support of $Y-\hat Y$. 
For simplicity, let $S' = S\setminus T$ (the set of corrupted rows which are not removed by the restriction). 
Then, the matrix evolves as follows:
\[
X = \begin{bmatrix}
 	X_{S,S} & X_{S,\overline S}\\
 	X_{\overline S, S} & X_{S,S}
 \end{bmatrix}
 \longrightarrow 
 Y = \begin{bmatrix}
 	X_{S,S} + E_{S,S} & X_{S,\overline S}\\
 	X_{\overline S, S} & X_{S,S}
 \end{bmatrix}
 \longrightarrow 
 \hat Y = \begin{bmatrix}
 	\mathbf 0_{T,T} & \mathbf 0_{T,S'} &\mathbf 0_{T,\overline{T\cup S}}\\
 	\mathbf 0_{S',T} & X_{S',S'} + E_{S',S'} & X_{S', \overline{T\cup S}}\\
 	\mathbf 0_{\overline{T\cup S}, T} & X_{\overline{T\cup S}, S'} & X_{\overline{T\cup S}, \overline{T\cup S}}
 \end{bmatrix}.
\]
In particular, if we let $E$ be the portion of the error matrix $Y-X$ which survives the restriction, and then let $F$ be the remainder in $\hat Y = X + E + F$, it follows that $F_{i,j}$ is either $-X_{i,j}$ or $0$. 
Furthermore, we will split $F$ into two sections: $F_1 \in \R^{|T|\times n}$ consisting of all entries in rows indexed by $T$ (in other words, $F_1 = -X_{T, [n]}$), and $F_2\in \R^{(n - |T|)\times |T|}$ consisting of all entries in columns indexed by $T$, except those covered by $F_1$ (in other words, $F_2 = -X_{\overline T, T})$. 
Pictorially, this can be represented via
\[
\hat Y - X = \begin{bmatrix}
 	\mathbf (-X)_{T,T} & (-X)_{T,S'} &\mathbf (-X)_{T,\overline{T\cup S}}\\
 	\mathbf (-X)_{S',T} & E_{S',S'} & \mathbf 0_{S', \overline{T\cup S}}\\
 	\mathbf (-X)_{\overline{T\cup S}, T} & \mathbf 0_{\overline{T\cup S}, S'} & \mathbf 0_{\overline{T\cup S}, \overline{T\cup S}}
 \end{bmatrix}
 =
 \begin{bmatrix}
 	F_1 & F_1 & F_1\\
 	F_2 & E_{S',S'} & \mathbf 0_{S', \overline{T\cup S}}\\
 	F_2 & \mathbf 0_{\overline{T\cup S}, S'} & \mathbf 0_{\overline{T\cup S}, \overline{T\cup S}}
 \end{bmatrix}.
 \]

As a warm-up, we show that each of these quantities is bounded in operator norm.

\begin{proposition}
	Suppose that $\|\hat Y\|_{\op} \le 5 \E[\|X\|_{\op}] =: 5\alpha$. 
	For the above definitions of $E, F_1, F_2$, we have that
	\[\|E\|_{\op} \le 6\alpha \qquad \mathrm{and}\qquad  \|F_1\|_{\op}, \|F_2\|_{\op} \le 2\alpha\]
	with high probability.
\end{proposition}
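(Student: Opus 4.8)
The plan is to bound each of the three matrices by a triangle-inequality argument, exploiting that $E$, $F_1$, and $F_2$ are all submatrices (in the sense of being restrictions to certain rows/columns) of the single matrix $\hat Y - X$, together with the a priori bounds $\|\hat Y\|_{\op} \le 5\alpha$ and $\|X\|_{\op} \le \alpha + o(1)$ w.h.p. (the latter being the standard operator-norm bound for a symmetric matrix with independent subgaussian entries of variance $\tfrac 1n$, which holds with high probability and is already invoked in the proof of \pref{lem:spectral-cleaning}). The key elementary fact I would use is that zeroing out some rows and/or columns of a matrix can only decrease its operator norm: if $M'$ is obtained from $M$ by restricting to a subset of rows and a subset of columns, then $\|M'\|_{\op} \le \|M\|_{\op}$, since $M'$ acts as $P M Q$ for coordinate projections $P, Q$ which are contractions.

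\medskip
\noindent\textbf{Step 1 (bound on $\hat Y - X$).} First I would observe that $\|\hat Y - X\|_{\op} \le \|\hat Y\|_{\op} + \|X\|_{\op} \le 5\alpha + \alpha + o(1) \le 6\alpha + o(1)$, where the bound on $\|X\|_{\op}$ holds with high probability. \textbf{Step 2 ($E$).} The matrix $E$ is exactly $\hat Y - X$ with the rows and columns indexed by $T$ zeroed out (equivalently, the portion of $Y - X$ surviving the restriction), so $E$ is a restriction of $\hat Y - X$ to a submatrix, and hence $\|E\|_{\op} \le \|\hat Y - X\|_{\op} \le 6\alpha + o(1)$; absorbing the $o(1)$ gives $\|E\|_{\op} \le 6\alpha$. (Alternatively one can run the argument more carefully: $E$ is supported on the $S' \times S'$ principal minor and equals $Y - X = E_{S,S}$ there, so $\|E\|_{\op} \le \|Y - X\|_{\op}$, and $\|Y-X\|_{\op} \le \|Y\|_{\op}+\|X\|_{\op}$ — but $\|Y\|_{\op}$ is not controlled, so it is cleaner to go through $\hat Y$ as above, using that $E_{S',S'}$ is a principal submatrix of $\hat Y - X$.) \textbf{Step 3 ($F_1, F_2$).} By definition $F_1 = -X_{T,[n]}$ is the restriction of $-X$ (equivalently, of $\hat Y - X$, since on those rows $\hat Y$ is zero) to the rows indexed by $T$; hence $\|F_1\|_{\op} \le \|X\|_{\op} \le \alpha + o(1) \le 2\alpha$. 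Similarly $F_2 = -X_{\overline T, T}$ is a restriction of $-X$ to the columns indexed by $T$ (and the complementary rows), so $\|F_2\|_{\op} \le \|X\|_{\op} \le \alpha + o(1) \le 2\alpha$.

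\medskip
\noindent I don't anticipate a real obstacle here; this proposition is a warm-up and the only things to be careful about are (i) which ambient matrix each piece is a submatrix of — for $E$ it is cleanest to use $\hat Y - X$ (since $\|Y\|_{\op}$ is uncontrolled), while for $F_1, F_2$ one uses $X$ directly — and (ii) confirming that all the ``with high probability'' events (really just the single event $\|X\|_{\op} \le \alpha + o(1)$) are simultaneously assumed. I would state the submatrix monotonicity of operator norm explicitly as a one-line lemma or inline remark, since it is used three times, and then the bounds follow immediately by the triangle inequality as above.
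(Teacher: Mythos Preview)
Your proposal is correct. For $E$ your argument is essentially identical to the paper's: both bound $\|E\|_{\op}$ by $\|\hat Y\|_{\op} + \|X\|_{\op} \le 6\alpha$ via a triangle inequality, using that $E$ sits inside a principal submatrix of $\hat Y - X$ (the paper phrases it as $\hat Y_{\overline T,\overline T} = X_{\overline T,\overline T} + E$, but the content is the same).

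For $F_1$ and $F_2$ you take a genuinely different and more elementary route. The paper applies a matrix concentration inequality for rectangular subgaussian matrices (e.g.\ Vershynin, Theorem~4.6.1) directly to $F_1 = -X_{T,[n]}$ and $F_2 = -X_{\overline T,T}$, obtaining $\|F_j\|_{\op} \le \alpha(1 + O(\sqrt{|T|/n}))$. Your argument instead uses only the deterministic monotonicity of the operator norm under row/column restriction, giving $\|F_j\|_{\op} \le \|X\|_{\op} \le \alpha + o(1)$ from the single high-probability event $\|X\|_{\op} \le \alpha + o(1)$. Your approach is cleaner here: it avoids invoking a separate concentration bound, and it sidesteps the subtlety that the set $T$ produced by the spectral cleaning step depends on $X$ (so that, strictly speaking, the paper's concentration argument would need a union bound over all $T$ of size $\le 4\eps n$ to be airtight). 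The trade-off is that the paper's approach, if done with the union bound, would in principle yield the sharper information that $\|F_j\|_{\op}$ is close to $\alpha\sqrt{|T|/n} = O(\alpha\sqrt\eps)$ rather than merely $O(\alpha)$, but this is not used downstream, so your simpler bound suffices.
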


\begin{proof}
	Let us begin with $F_1$. 
	Begin by considering $F_1$, which has each entry $F_{ij}$ an independent subgaussian random variable. 
	Applying standard matrix concentration arguments (e.g. Theorem 4.6.1 in \cite{Ver18}), we have with high probability that $\|F_1\|_{\op} \le \alpha(1 + O(\sqrt {|T|/n})) \le 4\alpha$. 
	We can apply a similar argument to see that $\|F_2\|_{\op} \le 4\alpha$ as well.
	
	Now, consider $\hat{Y}_{\overline T, \overline T} = X_{\overline T, \overline T} + E$. 
	We then have that
	\[\left\|E\right\|_{\op} \le \left\|\hat{Y}_{\overline T, \overline T}\right\|_{\op} + \left\|X_{\overline T, \overline T}\right\|_{\op} \le \|\hat{Y}\|_{\op} + \|X\|_{\op} \le 6\alpha\]
	as the operator norm of a principal minor is at most that of the original matrix.
\end{proof}

With this in mind, we are ready to prove~\pref{lem:amp-success-restriction}, which we reprint here for clarity.

\restatelemma{lem:amp-success-restriction}

The proof follows from a few central claims.
The first of these shows that clipping cannot substantially change how far we are from the true AMP iteration.

\begin{proposition}[Clipping preserves error]
\label{prop:clipping-preserves}
	Define $\widetilde{y}^{(t)}$ to be the unclipped version of $y^{(t)}$ (that is, the inner expression passed to $\clip^\eps(\cdot)$). 
	Then, 
	\[\|y^{(t)} - x^{(t)}\|_2 \le \|\widetilde{y}^{(t)} - x^{(t)}\|_2 + \sqrt{\eps n}\]
	with probability $1 - o_n(1)$.
\end{proposition}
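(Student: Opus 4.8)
The plan is to separate the effect of clipping from the effect of the corruption with one triangle inequality, and then to bound the $\ell_2$ error introduced by clipping the \emph{true} AMP iterate $x^{(t)}$ using the tail bound in \pref{cor:amp-conditioning}.

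First I would observe that $\clip^\eps$ acts coordinate-wise, and that on each coordinate the scalar map $z\mapsto \clip^\eps(z)$ is exactly the Euclidean projection onto the interval $[-\cutoff(\eps),\cutoff(\eps)]$; since projection onto a convex set is nonexpansive, $\clip^\eps\colon\R^n\to\R^n$ is $1$-Lipschitz. Writing $y^{(t)}=\clip^\eps(\widetilde{y}^{(t)})$ and using the triangle inequality together with this $1$-Lipschitzness,
\[
\|y^{(t)}-x^{(t)}\|_2 \;\le\; \bigl\|\clip^\eps(\widetilde{y}^{(t)})-\clip^\eps(x^{(t)})\bigr\|_2 + \bigl\|\clip^\eps(x^{(t)})-x^{(t)}\bigr\|_2 \;\le\; \|\widetilde{y}^{(t)}-x^{(t)}\|_2 + \bigl\|\clip^\eps(x^{(t)})-x^{(t)}\bigr\|_2 ,
\]
so it remains to show $\bigl\|\clip^\eps(x^{(t)})-x^{(t)}\bigr\|_2\le\sqrt{\eps n}$ with probability $1-o_n(1)$.

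For the clipping error on the true iterate, note that $\clip^\eps(x^{(t)}_i)\ne x^{(t)}_i$ only when $|x^{(t)}_i|>\cutoff(\eps)$, and in that case $\bigl|\clip^\eps(x^{(t)}_i)-x^{(t)}_i\bigr| = |x^{(t)}_i|-\cutoff(\eps)\le |x^{(t)}_i|$, so
\[
\bigl\|\clip^\eps(x^{(t)})-x^{(t)}\bigr\|_2^2 \;\le\; \sum_{i \,:\, (x^{(t)}_i)^2 > \cutoff(\eps)^2} (x^{(t)}_i)^2 .
\]
Now I would apply the first bullet of \pref{cor:amp-conditioning} with $f=g=\mathrm{id}\in\PL(1)$ (so that $g(\vec 0)=0$ and $k=\ell=1$) and $\theta=\cutoff(\eps)^2=C_T\log\tfrac1\eps$: for any $r\gg\max(t,1)$,
\[
\plim_{n\to\infty}\frac1n\sum_{i\,:\,(x^{(t)}_i)^2>\theta}(x^{(t)}_i)^2 \;\le\; \frac{C^{2r}(3r)^r}{\theta^{r}} \;=\; \Bigl(\tfrac{3C^2 r}{C_T\log\frac1\eps}\Bigr)^{\!r}.
\]
Since $r$ may be chosen freely (after the limit in $n$ is taken), I would take $r=\Theta(\log\tfrac1\eps)$ --- which is $\gg\max(T,1)$ once $\eps$ is small enough in terms of $T$ --- so that the right-hand side becomes $\eps^{\,\Theta(\log(C_T/C^2))}$, at most $\eps/2$ provided $C_T$ is a large enough constant depending on $T$ (through the constant $C=C(T)$ supplied by \pref{cor:amp-conditioning}). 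Hence with probability $1-o_n(1)$ over $X$ the empirical average on the left is at most $\eps$, i.e.\ $\bigl\|\clip^\eps(x^{(t)})-x^{(t)}\bigr\|_2^2\le\eps n$; combined with the displayed triangle inequality this proves the proposition.

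The step deserving the most care is establishing $\sum_{i\,:\,(x^{(t)}_i)^2>\cutoff(\eps)^2}(x^{(t)}_i)^2\le\eps n$: a more naive route --- bounding the \emph{number} of coordinates with $|x^{(t)}_i|>\cutoff(\eps)$ and then invoking the order-statistics bound (the second bullet of \pref{cor:amp-conditioning}) --- does not work, because for degree-$d$ polynomial denoisers the fraction of entries exceeding $\sqrt{C_T\log\frac1\eps}$ in magnitude is not $O(\eps)$ for constant $C_T$ once $d>1$. The argument succeeds precisely because the tail bound in \pref{cor:amp-conditioning} carries a free exponent $r$ that we may send to infinity as $\eps\to0$, upgrading decay that is merely polynomial in $\theta$ to a genuine $\eps$-bound; this is exactly what lets the clip threshold be $\polylog\tfrac1\eps$ rather than $\polylog n$.
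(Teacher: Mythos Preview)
Your proof is correct and follows essentially the same approach as the paper: the same triangle inequality exploiting the $1$-Lipschitzness of $\clip^\eps$, followed by the same application of the first bullet of \pref{cor:amp-conditioning} with $f=g=\mathrm{id}$, $\theta=C_T\log\frac1\eps$, and $r=\Theta(\log\frac1\eps)$ to bound the clipping error on the true iterate by $\eps n$.
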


The next proposition aims to show that even though $E, F_1$, and $F_2$ have constant operator norms, their row (or column) sparsity allow for controlling their effect on AMP iterates. Here we also introduce the shorthand $f_t(x)\triangleq f_t(x^{(t-1)}, x^{(t-2)}, \ldots, x^{(0)})$.

\begin{proposition}[Block-sparse corruptions have small error]
\label{prop:block-sparse-small}
Suppose that $f_t \in \PL(d_t)$ and define $\overline d_t = \max_{j\le t} d_j$.
There exists a constant $C > 0$ (independent of $n$ and $\eps$ but possibly dependent on $t$) such that each of $\|Ef_t(x)\|_2^2, \|F_1f_t(x)\|_2^2$, and $\|F_2f_t(x)\|_2^2$ are bounded by $C\eps n \cdot \log^{\overline d_t}\frac 1\eps$ with probability $1 - o_n(1)$.
\end{proposition}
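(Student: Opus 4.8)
### Proof proposal for Proposition~\ref{prop:block-sparse-small}

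The plan is to handle the three matrices $E$, $F_1$, $F_2$ by exploiting their \emph{block-sparsity} structure, reducing each bound to an application of the second bullet of~\pref{cor:amp-conditioning}. The key observation is that each of $E$, $F_1$, $F_2$ is supported either on a set of $O(\eps n)$ columns (so multiplication only "sees" an $O(\eps n)$-sparse slice of $f_t(x)$), or on a set of $O(\eps n)$ rows (so the output vector lives in an $O(\eps n)$-dimensional coordinate subspace and is itself a sparse restriction of a vector whose order statistics we understand).

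First I would dispose of $E$ and $F_1$, whose supports both lie in $O(\eps n)$ columns. For $E$, write $I \subset [n]$ for its set of $\le 4\eps n$ nonzero columns; then $E f_t(x) = E \cdot (f_t(x)|_I)$, so $\|E f_t(x)\|_2^2 \le \|E\|_\op^2 \cdot \sum_{i \in I} f_t(x)_i^2$. By the previous proposition $\|E\|_\op \le 6\alpha = O(1)$, and by the second bullet of~\pref{cor:amp-conditioning} (applied with $f = f_t \in \PL(d_t)$, and noting $d_t \le \overline d_t$ so $\log^{d_t}\tfrac1\eps \le \log^{\overline d_t}\tfrac1\eps$ for small $\eps$), we have $\tfrac1n \sum_{i \in I} f_t(x)_i^2 \le C\eps \log^{\overline d_t}\tfrac1\eps$ with probability $1-o_n(1)$, giving the claimed $C\eps n \log^{\overline d_t}\tfrac1\eps$ bound after absorbing $\|E\|_\op^2$ into the constant. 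The argument for $F_1 f_t(x)$ is verbatim the same: $F_1 = -X_{T,[n]}$ viewed as an $|T| \times n$ matrix padded with zero rows also has output supported on $\le 4\eps n$ rows, but more to the point $\|F_1\|_\op = O(1)$ by the proposition, so again $\|F_1 f_t(x)\|_2^2 \le \|F_1\|_\op^2 \sum_{i \in [n]} f_t(x)_i^2 \cdot \mathbf 1[\text{row } i \in T]$... actually for $F_1$ I should be slightly more careful: $F_1$ is supported on \emph{all} columns but only $|T| \le 4\eps n$ rows. So I bound $\|F_1 f_t(x)\|_2^2 \le \|F_1\|_\op^2 \|f_t(x)\|_2^2$? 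That gives $O(n)$, not $O(\eps n)$. The right move for $F_1$ is the row-sparse perspective used for $F_2$ below; let me instead treat $F_1$ and $F_2$ together.

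So the substantive case is $F_2$ (and $F_1$), which are restrictions of $-X$ to rows indexed by a set of size $\le 4\eps n$. Here I would use the identity $F_2 f_t(x) = -(X_{\overline T, T}) f_t(x) = -\big(X f_t(x)\big)_{\overline T}\big|_{\text{columns } T}$; more cleanly, if $F$ is $-X$ restricted to row-set $R$ with $|R| \le 4\eps n$ (and columns possibly also restricted, which only decreases the norm), then $F f_t(x)$ is supported on the coordinates $R$ and $(F f_t(x))_i = -(X f_t(x))_i$ (or a further-truncated version) for $i \in R$. Thus $\|F f_t(x)\|_2^2 \le \sum_{i \in R} (X f_t(x))_i^2$. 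Now $X f_t(x)$ is — up to the Onsager correction, which is a fixed linear combination of the previous denoised iterates $f_{j-1}(x^{(j-1)},\dots)$ — exactly the next AMP iterate $x^{(t+1)}$, hence it is an AMP-type vector whose coordinates are governed by~\pref{cor:amp-conditioning}. Precisely, $(Xf_t(x))_i = x^{(t+1)}_i + (\Delta_t)_i$, and each coordinate of $\Delta_t$ is $\sum_j B_{t,j} f_{j-1}(x^{(j-1)})_i$, so $Xf_t(x)$ coordinate-wise equals $h(x^{(t)}_i,\dots,x^{(0)}_i)$ for a fixed $\PL(\overline d_t)$ function $h$ (a polynomial/Lipschitz combination of $f_{t}$-powered-by-$X$ and the $f_{j-1}$'s — this is where I invoke that $x^{(t+1)}$ has pseudo-Lipschitz degree $\le \overline d_t$, consistent with the degree bookkeeping in the theorem statement). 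Applying the second bullet of~\pref{cor:amp-conditioning} with this $h$ and $I = R$ gives $\tfrac1n \sum_{i \in R}(Xf_t(x))_i^2 \le C\eps \log^{\overline d_t}\tfrac1\eps$, which is the desired bound for both $F_1 f_t(x)$ and $F_2 f_t(x)$.

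The main obstacle I anticipate is the bookkeeping around \pref{cor:amp-conditioning}: that corollary is stated for $f$ a function of a \emph{single} iterate-tuple $x^{(t)}$, whereas $Xf_t(x)$ naturally involves $X$ acting on $f_t$, which is not literally of the form "a fixed $\PL$ function applied coordinate-wise to $(x^{(t)},\dots,x^{(0)})$". The clean fix is to observe that $Xf_t(x) = x^{(t+1)} + \Delta_t(x^{(t)},\dots,x^{(0)})$ and $x^{(t+1)}$ \emph{is} the next AMP iterate, to which the corollary directly applies (treating $t+1$ in place of $t$, which is fine since $T$ is finite and $C$ may depend on the index), while $\Delta_t$ applied coordinate-wise is $\PL(\overline d_t)$ in the earlier iterates; combining via the Almost-Triangle Inequality $\|a+b\|^2 \le 2\|a\|^2 + 2\|b\|^2$ and applying the corollary to each piece finishes it. A secondary point of care is ensuring the various $o_n(1)$ failure events (the operator-norm bounds on $E, F_1, F_2$, and each of the finitely many $\plim$ statements from the corollary) are union-bounded — since there are $O(T)$ of them this is immediate.
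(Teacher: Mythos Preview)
Your overall strategy matches the paper's, but you have swapped the structural roles of $F_1$ and $F_2$. Recall $F_1 = -X_{T,[n]}$ (few rows, all columns) while $F_2 = -X_{\overline T, T}$ (many rows, few columns). Thus $F_2$ is \emph{column}-sparse, not row-sparse: its row set is $\overline T$, of size $(1-4\eps)n$, so your hypothesis ``$|R|\le 4\eps n$'' fails for $F_2$. Moreover, the parenthetical ``columns possibly also restricted, which only decreases the norm'' is false in general: if $F$ is $-X$ restricted to rows $R$ and columns $C$, then $(Ff_t(x))_i = -\sum_{j\in C} X_{ij} f_t(x)_j$, which is a partial sum and is neither equal to nor dominated by $|(-Xf_t(x))_i|$. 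So your row-sparse argument, as written, does not cover $F_2$.

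The fix is immediate and already in your toolkit: since $F_2$ is supported on the columns indexed by $T$ with $|T|\le 4\eps n$, one has $\|F_2 f_t(x)\|_2^2 \le \|F_2\|_{\op}^2 \sum_{i\in T} f_t(x)_i^2$, and this is bounded exactly as you handled $E$, via the second bullet of \pref{cor:amp-conditioning}. This is precisely the paper's split: $E$ and $F_2$ are dispatched by column-sparsity, and only $F_1$ requires the row-sparse rewriting $F_1 f_t(x) = (-Xf_t(x))_T$ together with the AMP identity $-Xf_t(x) = -x^{(t)} - \sum_{j} B_{t,j} f_{j-1}(x)$, which is coordinatewise a $\PL(\max_{j<t} d_j)$ function of the iterates $x^{(0)},\ldots,x^{(t)}$ to which \pref{cor:amp-conditioning} applies directly (so no almost-triangle split into $x^{(t)}$ and $\Delta_t$ is needed). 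Your treatment of $F_1$ is otherwise correct.
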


The final proposition aims to show that applying polynomials to clipped AMP iterates cannot dramatically change closeness. 
Note that this is not true in general and requires both boundedness and state evolution to hold in our case.

\begin{proposition}[Pseudo-Lipschitz functions preserve closeness of AMP iterates]
\label{prop:pl-preserves}
Suppose that $f_t\in \PL(d_t)$, and let $M = \max_{0\le i < t} \|y^{(i)} - x^{(i)}\|_2^2$.
Then, there exists a constant $C_T > 0$ (independent of $n$ and $\eps$ but dependent on $T$) such that
\[\left\|f_t(y) - f_t(x)\right\|_2^2 \le M(C_T t)^{d_t} \cdot \log^{d_t-1}\left(\tfrac 1\eps\right) + t\cdot \eps n\]
with probability $1 - o_n(1)$.
\end{proposition}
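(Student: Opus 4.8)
The plan is not to compare $f_t(y)$ and $f_t(x)$ head-on, but to pass through the coordinatewise clip of the \emph{true} iterates. Writing $\overline{x}^{(s)} := \clip^\eps(x^{(s)})$ for $s < t$, the triangle inequality together with the Almost-Triangle Inequality gives
\begin{align*}
\|f_t(y) - f_t(x)\|_2^2
&\le 2\big\|f_t(y^{(t-1)},\dots,y^{(0)}) - f_t(\overline{x}^{(t-1)},\dots,\overline{x}^{(0)})\big\|_2^2 \\
&\qquad + 2\big\|f_t(\overline{x}^{(t-1)},\dots,\overline{x}^{(0)}) - f_t(x^{(t-1)},\dots,x^{(0)})\big\|_2^2 .
\end{align*}
The detour is needed because state evolution, in the form of \pref{cor:amp-conditioning}, is a statement about the true iterates $x^{(s)}$ (whose order statistics we control), whereas about the clipped iterates $y^{(s)}$ we will only have the inductive bound $\|y^{(s)} - x^{(s)}\|_2^2 \le M$. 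The first term will produce the $M$-dependent part of the claimed bound, and the second will produce the $t \cdot \eps n$ part.

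For the first term, both arguments have all entries bounded by $\cutoff(\eps) = \sqrt{C_T\log\frac1\eps}$ in magnitude, so each ``local'' vector (the tuple of the time-$0$-through-$(t-1)$ entries at a fixed coordinate $i$) has $\ell_2$ norm at most $\sqrt t \cdot \cutoff(\eps)$. Applying the $\PL(d_t)$ inequality coordinatewise and summing over $i$ bounds this term by (a constant depending on the pseudo-Lipschitz constant of $f_t$) times $(tC_T\log\frac1\eps)^{d_t-1}\sum_{s<t}\|y^{(s)} - \overline{x}^{(s)}\|_2^2$. Since $\clip^\eps$ is the Euclidean projection onto the box $[-\cutoff(\eps),\cutoff(\eps)]^n$ and $y^{(s)}$ already lies in that box, $\|y^{(s)} - \overline{x}^{(s)}\|_2 \le \|y^{(s)} - x^{(s)}\|_2 \le \sqrt M$; summing over the $t$ values of $s$ and folding all the constants into a single constant depending on $T$ yields a bound of $M\,(C_T t)^{d_t}\log^{d_t-1}\frac1\eps$.

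For the second term only the true iterates appear, and here the clipping is favorable. Coordinatewise, $\overline{x}^{(s)}_i - x^{(s)}_i$ vanishes unless $|x^{(s)}_i| > \cutoff(\eps)$, in which case its magnitude is at most $|x^{(s)}_i|$, while $\|\overline{x}_i\|_2 \le \|x_i\|_2$ at every $i$. Feeding these facts into the $\PL(d_t)$ inequality, expanding $(1 + \|x_i\|_2^{d_t-1})^2$, and using the power-mean bound $\|x_i\|_2^{2d_t-2}(x^{(s)}_i)^2 \le \|x_i\|_2^{2d_t} \le t^{d_t-1}\sum_{s'<t}(x^{(s')}_i)^{2d_t}$, I would reduce the second term, after summing over $i$, to a constant (depending on $T$) times a sum of at most $t^2 + t$ quantities of the shape $\sum_{i=1}^n h(\vec x_i)^2\,\ind[g(\vec x_i)^2 > C_T\log\tfrac1\eps]$, where $\vec x_i = (x^{(t-1)}_i,\dots,x^{(0)}_i)$, $h$ is a monomial of degree at most $d_t$ (hence $\PL(d_t)$) and $g(\vec x_i) = x^{(s)}_i$ (which is $\PL(1)$ and vanishes at the origin). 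The first bullet of \pref{cor:amp-conditioning}, applied with $\theta = C_T\log\frac1\eps$ and $r = \lceil \kappa\log\frac1\eps\rceil$ for a suitable absolute constant $\kappa$ (so that $r \gg \max(t,d_t)$ once $\eps$ is small), bounds each such quantity in probability by $(1+o_n(1))\,n\cdot\theta^{-r}C^{2r}(3r)^r = (1+o_n(1))\,n\,(3C^2\kappa/C_T)^r$, which can be made at most $\eps n$ divided by any fixed polynomial in $t$ by enlarging $C_T$ (depending on $T$ and the pseudo-Lipschitz constants). Since there are only $O_T(1)$ terms, with $T$-dependent prefactors, summing gives a bound of $t\cdot\eps n$ (indeed of $\eps n$) on the second term, with probability $1 - o_n(1)$.

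The step I expect to be the main obstacle is this second-term estimate — specifically, calibrating the clipping threshold $\cutoff(\eps) = \sqrt{C_T\log\frac1\eps}$ against the tail bound of \pref{cor:amp-conditioning} so that the truncation error is genuinely $O(\eps n)$. This is precisely where the constant $C_T$ must be taken large, depending on $T$ and on the denoisers' pseudo-Lipschitz constants: we need the exponent $r = \Theta(\log\frac1\eps)$ to exceed $\max(t,d_t)$ while simultaneously keeping $3C^2\kappa/C_T$ small enough that $(3C^2\kappa/C_T)^r \le \eps$. The remainder — the pseudo-Lipschitz bookkeeping, the contraction property of $\clip^\eps$ as a projection onto a box, and a union bound over the $O_T(1)$ invocations of \pref{cor:amp-conditioning} (each failing with probability $o_n(1)$ by convergence in probability) — is routine, and combining the two terms via the Almost-Triangle Inequality completes the argument.
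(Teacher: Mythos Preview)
Your approach is correct and takes a genuinely different route from the paper. The paper applies the $\PL(d_t)$ inequality directly to $f_t(y)-f_t(x)$ and then splits the weight $(1+\|y_i\|^{d_t-1}+\|x_i\|^{d_t-1})^2$: the $\|y_i\|$--piece is controlled by clipping, while for the $\|x_i\|$--piece it performs a coordinatewise case split on whether $\|x_i\|^2$ exceeds $C_T t\log\frac1\eps$, in the large case using the crude bound $(y^{(j)}_i-x^{(j)}_i)^2\le(|x^{(j)}_i|+2\cutoff(\eps))^2$ so that only $x$ remains and \pref{cor:amp-conditioning} applies. Your detour through $\overline{x}=\clip^\eps(x)$ achieves the same separation more cleanly: the first half has two bounded arguments and is handled purely by the contraction property of the box projection, yielding the $M$--term; the second half involves only the true iterates and feeds directly into state evolution. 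Your decomposition thus replaces the paper's case analysis and the ad hoc bound on $|y^{(j)}_i-x^{(j)}_i|$ by the single observation that $\clip^\eps$ is a nonexpansive projection.

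One minor bookkeeping point: your second half carries a prefactor of order $L^2\,t^{d_t+1}$ (from the power--mean step and the $t^2+t$ terms), so to land on exactly $t\cdot\eps n$ rather than $O_T(1)\cdot\eps n$ you do need, as you note, to enlarge $C_T$ so that each invocation of \pref{cor:amp-conditioning} returns $\eps^{1+\delta}$ rather than $\eps$; this tacitly uses ``$\eps$ small enough depending on $T$.'' The paper avoids this by summing exactly $t$ applications of the corollary, each yielding $\eps n$. The discrepancy is immaterial for the downstream induction in \pref{lem:amp-success-restriction}, where $t\eps n$ and $O_T(1)\,\eps n$ are interchangeable, but it is worth flagging if you want the bound exactly as stated.
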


Together, these three propositions allow us to prove the lemma.

\begin{proof}[Proof of~\protect{\pref{lem:amp-success-restriction}}]
	We prove by induction on the iteration $t$. 
	Certainly, $y^{(0)} = x^{(0)} = \vec 1$ so the base case is complete. 
	Else, suppose we have proven the statement for all $k < t$. We prove for $t$.
	
	By~\pref{prop:clipping-preserves}, we have that $\|y^{(t)} - x^{(t)}\|_2^2 \le \|\widetilde y^{(t)} - x^{(t)}\|_2^2 + \eps n$, so let us handle this first term. 
	To decrease verbiage, let $\mathsf{MAX} = \max\limits_{1\le j\le t}\|f_j(y) - f_j(x)\|_2$.
	By the Triangle Inequality and the definition of the AMP iteration, we have that
\begin{align*}
\|\widetilde y^{(t)} - x^{(t)}\|_2 &= \left\|\hat Y f_t(y) - Xf_t(x)+\sum_{j=1}^{t}B_{t,j}\left(f_{j-1}(y) - f_{j-1}(x)\right)\right\|_2\\
	&\le \left\|\hat Y(f_t(y) - f_t(x) + f_t(x)) - Xf_t(x)\right\|_2 + \sum_{j=1}^{t}|B_{t,j}| \left\|f_{j-1}(y) - f_{j-1}(x)\right\|_2\\
	&\le \left\|\hat{Y}(f_t(y) - f_t(x))\right\|_2 + \left\|(\hat Y - X)f_t(x)\right\|_2 + \mathsf{MAX}\cdot \sum_{j=1}^{t}|B_{t,j}|\\
	&\le \left\|\hat{Y}\right\|_{\op}\left\|(f_t(y) - f_t(x))\right\|_2 + \left\|(\hat Y - X)f_t(x)\right\|_2 + \mathsf{MAX}\cdot \sum_{j=1}^{t}|B_{t,j}|\\
	&\le \mathsf{MAX}\cdot \left(10 + \sum_{j=1}^{t}|B_{t,j}|\right) + \|Ef_t(x)\|_2 + \|F_1 f_t(x)\|_2 + \|F_2 f_t(x)\|_2\\
	&\le C\cdot \sqrt{M(C_Tt)^{d_t} \cdot \log^{d_t-1}\left(\tfrac 1\eps\right) + t\cdot \eps n} + 3\sqrt{C\eps n\cdot \log^{\overline d_t} \left(\tfrac 1\eps\right)}
\end{align*}
where for the last inequality we applied~\pref{prop:block-sparse-small} and~\pref{prop:pl-preserves}. 
Now, the Almost-Triangle Inequality and combining with~\pref{prop:clipping-preserves} implies that
\begin{align*}
\|y^{(t)} - x^{(t)}\|_2^2 &\le 2C^2\left(M(C_Tt)^{d_t} \cdot \log^{d_t-1}\left(\tfrac 1\eps\right) + t\cdot \eps n\right) + 36C\eps n\cdot \log^{\overline d_t} \left(\tfrac 1\eps\right) + 2\eps n	\\
&= M(Ct)^{d_t} \cdot \log^{d_t - 1}\left(\tfrac 1\eps\right) + C \cdot \eps n\log^{\overline d_t} \left(\tfrac 1\eps\right).
\end{align*}
If the AMP iteration consists of Lipschitz denoisers, it follows that $d_t = 1$ for all $t$ and thus $\|y^{(t)} - x^{(t)}\|_2^2 \le (Ct)^t \cdot \eps n \log \frac 1\eps$. 
Else, notice that the power of $\log \frac 1\eps$ can be at most $t \overline d_t$, which completes the proof.

\end{proof}

We finish this section by proving the three propositions.

\begin{proof}[Proof of~\protect{\pref{prop:clipping-preserves}}]
Note that $\clip^{\eps}$ is a $1$-Lipschitz function.
So, by the triangle inequality,
\begin{align*}
\|y^{(t)} - x^{(t)}\|_2 
= \|\clip^{\eps}(\tilde{y}^{(t)}) - x^{(t)}\|_2
&\le \|\clip^{\eps}(\tilde{y}^{(t)}) - \clip^{\eps}(x^{(t)})\|_2 + \|\clip^{\eps}(x^{(t)}) - x^{(t)}\|_2\\
&\le \|\tilde{y}^{(t)} - x^{(t)}\|_2 + \|\clip^{\eps}(x^{(t)}) - x^{(t)}\|_2\\
&= \|\tilde{y}^{(t)} - x^{(t)}\|_2 + \left(\sum_{i=1}^n(x^{(t)}_i)^2 \cdot \ind\left[(x^{(t)}_i)^2 > C_T \log \frac 1\eps \right]\right)^{1/2}
\end{align*}
so it remains to bound this second quantity. 
We may apply~\pref{cor:amp-conditioning} with $f(\vec x_i) = g(\vec x_i) = x_i^{(t)}$ (which is Lipschitz) and $\theta = C_T \log \frac 1\eps$, which implies that
\[\frac 1n\sum_{i=1}^{n}(x^{(t)}_i)^2 \cdot \ind\left[(x^{(t)}_i)^2 > C_T \log \frac 1\eps\right]\le \frac{1}{(C_T \log \frac 1\eps)^{r}} \cdot (C')^{2r}\cdot (3r)^r = \left(\frac{3(C')^2\cdot r}{C_T \log \frac 1\eps}\right)^{r}.\]
By choosing $C_T \ge 3e(C')^2$ and taking $r = \log \frac 1\eps$, it follows that 
\[\frac 1n\sum_{i=1}^{n}(x^{(t)}_i)^2 \cdot \ind\left[(x^{(t)}_i)^2 > C_T \log \frac 1\eps\right] \le \eps\footnote{It may seem a bit counterintuitive that the resulting error is $\eps$, not $\eps \log \frac 1\eps$, however this is because we chose a larger-than-required $C_T$. 
By carefully optimizing this constant, we can recover the $\log \frac 1\eps$ dependence at the advantage of decreasing some constants in other parts of the overall argument.}\]
from where the conclusion follows.
\end{proof}

\begin{proof}[Proof of~\protect{\pref{prop:block-sparse-small}}]
	Let $S'$ be the indices in the support of $E$, and let $T$ be the set of indices in the row-support of $F_1$ (and column-support of $F_2$), as in the figure at the beginning of the section.
	For a given vector $v$, we will define $v_{S'}$ to be the restriction of $v$ to $S'$.
	Then, note that
	\[\|Ef_t(x)\|_2^2 = \left\|E (f_t(x))_{S'}\right\|_2^2 \le \|E\|_{\op} \cdot \left\|f_t(x)_{S'}\right\|_2^2 \le 12\left\|f_t(x)_{S'}\right\|_2^2,\]
	and similarly $\|F_2 f_t(x)\|_2^2 \le \|F_2\|_{\op}\left\|f_t(x)_{T}\right\|_2^2 \le 4\left\|f_t(x)_{T}\right\|_2^2$. 
	To handle each of these, notice that $|S'|, |T| \le 4\eps n$. 
	Therefore, we may apply~\pref{cor:amp-conditioning} to deduce that
	\[\frac 1n\left\|f_t(x)_{S'}\right\|_2^2 \le C\eps \log^{d_t} \frac 1\eps\]
	and similarly for $f_t(x)_T$. 
	This implies the boundedness of $\|Ef_t(x)\|_2^2$ and $\|F_2 f_t(x)\|_2^2$.
	
	We cannot use the same argument for $\|F_1 f_t(x)\|_2^2$ because $F_1$ is supported on all columns
	Instead, let us recall that $F_1 = -X$ on its supported rows, so $F_1 f_t(x) = (-Xf_t(x))_T$ and we are trying to bound $\|F_1f_t(x)\|_2^2 = \left\|(-Xf_t(x))_T\right\|_2^2$. 
	Using the definition of the AMP iteration, we can rewrite
	\[x^{(t)} = Xf_t(x) - \sum_{j=1}^{t}B_{t,j}f_{j-1}(x) \implies -X f_t(x) = -x^{(t)}-\sum_{j=1}^{t}B_{t,j}f_{j-1}(x).\]
	Therefore, $-Xf_t(x)\in \PL(\max_{j<t}d_j)$ and is a function of the iterates $x^{(0)}, x^{(1)}, \ldots, x^{(t)}$. 
	Once more applying~\pref{cor:amp-conditioning}, it follows that
	\[\frac 1n\|F_1f_t(x)\|_2^2 = \frac 1n \left\|(-Xf_t(x))_T\right\|_2^2 \le C\eps \log^{\max_{j<t}d_j}\frac 1\eps\]
	and we are done.
\end{proof}

\begin{proof}[Proof of~\protect{\pref{prop:pl-preserves}}]
	We begin by applying the definition of $\PL(d_t)$. 
	In particular,  combined with the Almost-Triangle Inequality we find that
\begin{align}
	\|f_t(y) - f_t(x)\|_2^2 &= \sum_{i=1}^{n} (f_t(y_i) - f_t(x_i))^2\nonumber\\
	&\le L^2 \sum_{i=1}^{n}(1 + \|y_i\|^{d_t - 1} + \|x_i\|^{d_t - 1})^2\cdot \|y_i - x_i\|^2\nonumber\\
	&\le 3L^2 \sum_{i=1}^{n}\left(1 + \|y_i\|^{2(d_t - 1)} + \|x_i\|^{2(d_t - 1)}\right) \cdot \sum_{j=0}^{t-1} (y^{(j)}_i - x^{(j)}_i)^2\nonumber\\
	&\le 3L^2 \left(1 + \max_i \|y_i\|^{2(d_t - 1)}\right)\sum_{j=0}^{t-1} \left\|y^{(j)} - x^{(j)}\right\|_2^2 + \sum_{j=0}^{t-1}\sum_{i=1}^{n}\|x_i\|^{2(d_t - 1)}\cdot (y^{(j)}_i - x^{(j)}_i)^2\nonumber\\
	&\le M\cdot 6tL^2 (C_T \cdot t\log \tfrac 1\eps)^{d_t - 1} + \sum_{j=0}^{t-1}\sum_{i=1}^{n}\|x_i\|^{2(d_t - 1)}\cdot (y^{(j)}_i - x^{(j)}_i)^2\label{eq:split-pl}.
\end{align}
The last inequality holds because for each $i$, $\|y_i\|^{2(d_t - 1)} = \left(\sum_{j=0}^{t-1}(y^{(j)}_i)^2\right)^{d_t - 1} \le (C_T\cdot t\log \frac 1\eps)^{d_t - 1}$. 
Therefore, it remains to handle the last sum.

We claim that
\[\|x_i\|^{2(d_t - 1)}\cdot (y^{(j)}_i - x^{(j)}_i)^2 \le \Biggl[(C_T \cdot t\log \tfrac 1\eps)^{d_t - 1} (y^{(j)}_i - x^{(j)}_i)^2\Biggr] + \Biggl[\|x_i\|^{2(d_t - 1)}\cdot (|x_i^{(j)}| + 2\sqrt{C_T\log \tfrac 1\eps})^2 \cdot \ind\left[\|x_i\|^2 > C_T\cdot t\log \tfrac 1\eps\right]\Biggr].\]
Indeed, 
\begin{itemize}
	\item If $\|x_i\|^2\le C_T \cdot t\log \frac 1\eps$, then certainly the left side is bounded by the first term.
	\item Else, note that $(y^{(j)}_i - x^{(j)}_i)^2 \le (|x_i^{(j)}| + 2\sqrt{C_T \log \frac 1\eps})^2$, where the absolute value protects against opposite signs. 
	Therefore, in this latter case we have that the left side is bounded by the second term.
\end{itemize}
Summing over $i\in [n]$, it follows that
\[\sum_{i=1}^{n}\|x_i\|^{2(d_t - 1)}\cdot (y^{(j)}_i - x^{(j)}_i)^2 \le M\cdot (C_T\cdot t\log \frac 1\eps)^{d_t - 1} + \sum_{i=1}^{n}\|x_i\|^{2(d_t - 1)}\cdot (|x_i^{(j)}| + 2\sqrt{C_T\log \tfrac 1\eps})^2 \cdot \ind\left[\|x_i\|^2 > C_T\cdot t\log \tfrac 1\eps\right]\]

Applying~\pref{cor:amp-conditioning} to this second term with $f(\vec x_i) = \|x_i\|^{d_t - 1} \cdot \left(|x_i^{(j)}| + 2\sqrt{C_T\log \tfrac 1\eps}\right)$, $g(\vec x_i) = \|x_i\|$ (which is Lipschitz), and $\theta = C_T \cdot t\log \frac 1\eps$, it follows that
\[\frac 1n \sum_{i=1}^{n}\|x_i\|^{2(d_t - 1)}\cdot (|x_i^{(j)}| + 2\sqrt{C_T\log \tfrac 1\eps})^2 \cdot \ind\left[\|x_i\|^2 > C_T\cdot t\log \tfrac 1\eps\right] \le \left(\frac{3(C')^2r}{C_T\cdot t\log \frac 1\eps}\right)^r \le \eps\]
by taking $r = \log\frac 1\eps$ and having $C_T\cdot t > 3e(C')^2$. Therefore, plugging this all back in to~\pref{eq:split-pl}, we have that
\begin{align*}
\|f_t(y) - f_t(x)\|_2^2 &\le M\cdot 6tL^2 (C_T \cdot t\log \tfrac 1\eps)^{d_t - 1} + \sum_{j=0}^{t-1}\sum_{i=1}^{n}\|x_i\|^{2(d_t - 1)}\cdot (y^{(j)}_i - x^{(j)}_i)^2\\
&\le M\cdot 6tL^2 (C_T \cdot t\log \tfrac 1\eps)^{d_t - 1} + \sum_{j=0}^{t-1}M\cdot (C_T\cdot t\log \frac 1\eps)^{d_t - 1} + \eps n\\
&\le M \cdot (C_T \cdot t)^{d_t} \cdot \log^{d_t - 1}\left(\tfrac 1\eps\right) + t\cdot \eps n 
\end{align*}
as desired, assuming that $C_T > 6L^2$.
\end{proof}

\section{AMP is robust to small spectral perturbations}\label{sec:spectral}

Here we argue that AMP is robust to spectral perturbations.

\begin{lemma}
Suppose that $X$ has independent entries of mean $0$, variance $\frac{1}{n}$, and subgaussian parameter $\frac{O(1)}{\sqrt{n}}$.
	Let $\calF$ be an AMP algorithm consisting of Lipschitz denoiser functions with Lipschitz constant at most $L$, and let $\vAMP(X)$ denote the output of the $T$-step AMP algorithm on input $X$, and $\vAMP(Y)$ denote the output of the same algorithm on input $Y$ for any $Y$ satisfying $\|Y-X\|_{\op} \le \eps$. 
Then there exists a universal constant $C$ such that with probability $1-o(1)$ over $X$,
	\[\frac 1n \|\vAMP(Y) - \vAMP(X)\|_2^2 \le \eps^2 \cdot C^{2T+2}\cdot ((T+1)!)^2.
\]
\end{lemma}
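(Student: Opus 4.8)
The plan is to prove the bound by induction on $t$, tracking the \emph{cumulative} error $E_t^2 := \sum_{k=0}^{t}\bigl\|y^{(k)}-x^{(k)}\bigr\|_2^2$, where $x^{(0)},\dots,x^{(T)}$ and $y^{(0)},\dots,y^{(T)}$ denote the AMP iterates of $\calF$ on $X$ and on $Y$ respectively (no clipping is involved here, since for Lipschitz denoisers $\vAMP(Y)$ is simply the plain AMP run). The key structural point is that both runs use the \emph{same} Onsager coefficients $B_{t,j}$: by \pref{def:onsager} these depend only on $\calF$, not on the input matrix, so in the difference $y^{(t)}-x^{(t)}$ the Onsager contributions collapse to $\sum_{j=1}^{t}B_{t,j}\bigl(f_{j-1}(y)-f_{j-1}(x)\bigr)$. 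Before the induction I would record two facts holding with probability $1-o(1)$ over $X$: first, $\|X\|_{\op}\le C_0$ for an absolute constant $C_0$ (standard subgaussian matrix concentration), so $\|Y\|_{\op}\le C_0+\eps\le C_0+1$; second, a uniform bound $\tfrac1n\|f_k(x)\|_2^2\le B$ for all $k\le T$. The second follows from a short induction: $L$-Lipschitzness gives $\|f_k(x)\|_2\le \sqrt n\,|f_k(\vec 0)|+L\bigl(\sum_{j\le k}\|x^{(j)}\|_2^2\bigr)^{1/2}$, while $\|x^{(k)}\|_2\le \|X\|_{\op}\|f_{k-1}(x)\|_2+\sum_{j}|B_{k-1,j}|\,\|f_{j-1}(x)\|_2$ with $|B_{k,j}|\le L$ (the partials of an $L$-Lipschitz map are $\le L$ almost everywhere); starting from $\|x^{(0)}\|_2=\sqrt n$ this closes into an $n$-free bound $B=B(\calF,T)$ (one could alternatively invoke state evolution).

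For the inductive step I would expand the iteration and apply the triangle inequality:
\[
\bigl\|y^{(t)}-x^{(t)}\bigr\|_2\ \le\ \|Y\|_{\op}\,\bigl\|f_{t-1}(y)-f_{t-1}(x)\bigr\|_2\ +\ \bigl\|(Y-X)f_{t-1}(x)\bigr\|_2\ +\ \sum_{j=1}^{t-1}|B_{t-1,j}|\,\bigl\|f_{j-1}(y)-f_{j-1}(x)\bigr\|_2 .
\]
The middle term is at most $\|Y-X\|_{\op}\|f_{t-1}(x)\|_2\le \eps\sqrt{Bn}$. For the other two terms, applying the Lipschitz bound coordinatewise gives, for every $j\le t-1$, $\bigl\|f_j(y)-f_j(x)\bigr\|_2^2\le L^2\sum_{k=0}^{j}\|y^{(k)}-x^{(k)}\|_2^2=L^2E_j^2\le L^2E_{t-1}^2$; combining this with $\|Y\|_{\op}\le C_0+1$ and $|B_{t-1,j}|\le L$ yields $\|y^{(t)}-x^{(t)}\|_2\le C_1 t\,E_{t-1}+\eps\sqrt{Bn}$ with $C_1:=\max\{(C_0+1)L+L^2,\,1\}$. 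Squaring via the Almost-Triangle Inequality and adding $E_{t-1}^2$,
\[
E_t^2\ =\ E_{t-1}^2+\bigl\|y^{(t)}-x^{(t)}\bigr\|_2^2\ \le\ \bigl(1+2C_1^2t^2\bigr)E_{t-1}^2+2\eps^2 Bn\ \le\ 3C_1^2t^2\,E_{t-1}^2+2\eps^2 Bn .
\]

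Finally I would unroll this recursion from the initial condition $E_0^2=0$ (as $y^{(0)}=x^{(0)}=\vec 1$):
\[
E_T^2\ \le\ 2\eps^2 Bn\sum_{s=1}^{T}\ \prod_{k=s+1}^{T}3C_1^2k^2\ \le\ 2\eps^2 Bn\cdot T\cdot (3C_1^2)^{T-1}(T!)^2 ,
\]
since each of the $T$ products is at most the largest, $\prod_{k=2}^{T}3C_1^2k^2=(3C_1^2)^{T-1}(T!)^2$. As $\tfrac1n\|\vAMP(Y)-\vAMP(X)\|_2^2=\tfrac1n\|y^{(T)}-x^{(T)}\|_2^2\le \tfrac1n E_T^2$ and $T(T!)^2\le ((T+1)!)^2$, taking $C:=\max\{\sqrt{2B},\ \sqrt3\,C_1,\ 1\}$ (so $2B\le C^2$ and $(3C_1^2)^{T-1}\le C^{2T}$) gives exactly $\tfrac1n\|\vAMP(Y)-\vAMP(X)\|_2^2\le \eps^2 C^{2T+2}((T+1)!)^2$, with $C$ depending only on $\calF$.

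The one point that needs care is the bookkeeping choice: carrying the cumulative error $E_t^2=\sum_{k\le t}\|y^{(k)}-x^{(k)}\|_2^2$ rather than the crude bound $\|f_j(y)-f_j(x)\|_2\le L\sqrt{j+1}\,\max_{k\le j}\|y^{(k)}-x^{(k)}\|_2$. The crude route contributes an extra $\sqrt{j+1}$ factor at each level, hence an extra $(T!)^{1/2}$ overall, which would push the final estimate past $((T+1)!)^2$. Everything else --- the matrix concentration, the Lipschitz manipulations, and the geometric-type recursion --- is routine.
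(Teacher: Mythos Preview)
Your proof is correct and follows essentially the same approach as the paper: induction on $t$, the same three-term triangle-inequality decomposition of $y^{(t)}-x^{(t)}$ into $Y(f(y)-f(x))$, $(Y-X)f(x)$, and the Onsager difference, and then Lipschitzness to close the recursion. The differences are purely in bookkeeping: you track the cumulative quantity $E_t^2=\sum_{k\le t}\|y^{(k)}-x^{(k)}\|_2^2$, whereas the paper carries the single-step error and bounds the sum appearing in the Lipschitz estimate by $k$ times the largest previous term; and you bound $\|f_t(x)\|_2$ by a direct inductive argument while the paper invokes \pref{cor:amp-conditioning}. Your justification $|B_{t,j}|\le L$ from the a.e.\ derivative bound is also slightly crisper than the paper's appeal to subgaussianity. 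None of these changes the substance of the argument.
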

Since the starting iterate $x^{(0)} = \vec{1}$ and $X$ has entries of variance $\frac{1}{n}$, the scaling $\frac{1}{n}\|\vAMP(X)\|^2 \sim 1$ is of the correct order for reasonable denoisers, in which case the above implies that $\vAMP(X),\vAMP(Y)$ are $1-O(\eps^2)$-correlated.

\begin{proof}
Let us denote the iterates to be $y^{(t)}$ and $x^{(t)}$ for $Y$ and $X$, respectively, and let $E = Y-X$. 
When $t = 0$, $x^{(0)} = y^{(0)} = \vec 1$ so the statement trivially holds.
Now assuming we have shown this for all $k < t$, we will now show it $t$. 
We will use the shorthand $f_k(x) = f_k(x^{(k)},\ldots,x^{(0)})$.
We may expand the expression for $y^{(t)} - x^{(t)}$:
\begin{align*}
\left\|y^{(t)} - x^{(t)}\right\|_2 &= \left\|Yf_t(y) - Xf_t(x) + \sum_{j=1}^{t}B_{t,j}\left(f_j(y) - f_j(x)\right)\right\|_2\\
&= \left\|Y\big(f_t(y) - f_t(x)\big) + (Y-X) f_t(x) +\sum_{j=1}^{t} B_{t,j}\cdot \left(f_j(y) - f_j(x)\right)\right\|_2\\
&\le \|Y\|_{\op}\|f_t(y) - f_t(x)\| + \|Y-X\|_{\op}\|f_t(x)\| +\sum_{j=1}^{t} |B_{t,j}|\cdot \left\|f_j(y) - f_j(x)\right\|\\
\intertext{And since $\|Y\|_{\op} \le 2 \|X\|_{\op} = O(1)$ with high probability and $|B_{t,j}| = O(1)$ from the subgaussianity of $X$, and $\|Y-X\|_{\op} \le \eps$ by assumption, for a constant $C$ sufficiently large,}
&\le (Ct+C/2)\max_{k \le t}\left\|f_k(y) - f_k(x)\right\|_2 + \eps \left\|f_t(x)\right\|_2
\end{align*}
To control the first term, we invoke the Lipschitzness of $f$,
\[\|f_k(y) - f_k(x)\|_2^2 \le L\sum_{j=1}^k\|y^{(j)} - x^{(j)}\|_2^2 \le Lk\|y^{(k-1)} - x^{(k-1)}\|_2^2 \le (C^{k}(k)!)^2 \cdot \eps^2 n.\]

For the second term, we have from~\pref{cor:amp-conditioning} (applied with $\eps = 1$) that $\|f_t(x)\|_2 \le \frac{1}{2}C\sqrt n$. Combining these facts, we find that
\[\|y^{(t)} - x^{(t)}\|_2 \le (Ct+\tfrac{1}{2}C)(C^{t}t!)\cdot \eps\sqrt {n} + \tfrac{1}{2}C\eps \sqrt{n} = \eps \sqrt n\cdot C^{t + 1}(t+1)!\]
and so
\[\frac{1}{n}\|y^{(t)} - x^{(t)}\|_2^2 \le \eps^2 \cdot (C^{t+1} (t+1)!)^2.\qedhere\]
\end{proof}

\subsection*{Acknowledgments}
We thank Spencer Compton, Sam Hopkins and Andrea Montanari for helpful discussions.

\bibliographystyle{alpha}
\bibliography{main}

\appendix

\addcontentsline{toc}{section}{Appendices}

\section{Statistics of AMP iterate entries}\label{app:amp}

The most important theorem for us is known as \emph{state evolution}, which intuitively states that the statistics of entries iterates of an AMP iteration behave somewhat like statistics of Gaussians. 
There are two version of state evolution that will be important to us, corresponding to polynomial and Lipschitz iterations.

\begin{theorem}[Polynomial State Evolution (e.g.~\protect{\cite[Theorem 4]{DBLP:conf/isit/BayatiLM12}} or~\protect{\cite[Theorem 4.21 and Theorem 5.2]{DBLP:journals/corr/abs-2404-07881}})]
	Suppose that $x^0, x^1, \ldots, x^T$ is an AMP iteration corresponding to polynomial denoisers, with input $X$ a symmetric matrix having i.i.d $\frac {O(1)}{\sqrt n}$-subgaussian entries with mean $0$ and variance $\frac 1n$. 
	Then, for any $\PL(k)$ function $\varphi: \R^{T+1}\rightarrow \R$, 
	\[\plim_{n\rightarrow \infty} \frac 1n \sum_{i=1}^{n}\varphi(x^T_i,x^{T-1}_i,\ldots,x^0_i) = \E[\varphi(U^T, U^{T-1}, \ldots, U^0)]\]
	where $U^0,U^1,\ldots,U^T$ form an appropriate Gaussian process with covariance independent of $n$.
\end{theorem}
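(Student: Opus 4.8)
Since this is a standard fact from the AMP literature, the plan is mainly to invoke the cited references \cite{DBLP:conf/isit/BayatiLM12,DBLP:journals/corr/abs-2404-07881}; for completeness I would record the two-stage structure of the argument. The first stage is a \emph{universality} reduction to Gaussian $X$, and the second is the \emph{conditioning} argument of Bayati--Montanari for Gaussian data.

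For the universality step, I would first observe that with polynomial denoisers each iterate $x^{(t)}$ is, coordinatewise, a fixed polynomial (of degree bounded in terms of $T$ and the denoiser degrees) in the entries of $X$; hence, after expanding $\varphi$ as a polynomial — or, for a general $\PL(k)$ test function $\varphi$, after first approximating $\varphi$ by polynomials on the relevant range and discarding the tails using the subgaussianity of the entries — the quantity $\frac1n\sum_i \varphi(x_i^{(T)},\ldots,x_i^{(0)})$ is itself a bounded-degree polynomial in the independent, mean-zero, variance-$\frac1n$ entries of $X$. I would then run a Lindeberg-type exchange argument (as in Bayati--Lelarge--Montanari), swapping the entries of $X$ one at a time for Gaussians with the same first two moments; subgaussianity supplies the uniform moment control that makes the total swapping error negligible as $n\to\infty$ and guarantees the $\plim$ exists and is finite. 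This reduces the statement to $X$ having Gaussian entries.

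For the Gaussian case I would induct on $T$. Let $\mathcal F_t$ be the $\sigma$-algebra generated by $x^{(0)},\ldots,x^{(t)}$ together with the denoiser evaluations through step $t$. The engine is the conditional-distribution lemma: given $\mathcal F_t$, the matrix $X$ is distributed as a fresh Gaussian matrix plus a correction supported on the span of the already-revealed directions. Plugging this into $x^{(t+1)} = X f_t(x^{(t)},\ldots,x^{(0)}) - \sum_{j=1}^t B_{t,j} f_{j-1}(\cdots)$ shows that, conditionally, $x^{(t+1)}$ is a Gaussian vector whose coordinates are asymptotically i.i.d.\ with the predicted state-evolution covariance, \emph{once the Onsager term cancels the ``memory'' contributions produced by the correction}. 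A conditional law of large numbers at each step then upgrades this to convergence of the empirical distribution of the rows $(x_i^{(0)},\ldots,x_i^{(T)})$ to the Gaussian process $(U^0,\ldots,U^T)$ in Wasserstein-$k$, which is exactly the mode of convergence that yields $\plim_n \frac1n\sum_i\varphi = \E[\varphi(U^T,\ldots,U^0)]$ for every $\varphi\in\PL(k)$ (again using subgaussian moment bounds). Here I would also invoke the remark following \pref{def:onsager}: replacing the data-dependent Onsager coefficients $b_{t,j}$ by their expectations $B_{t,j}$ perturbs the iterates by only $o(1)$, so the version with constant coefficients is covered as well.

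The hard part — and the reason this is quoted rather than proved — is the bookkeeping in the second stage: verifying that $\sum_j B_{t,j} f_{j-1}(\cdots)$ exactly cancels, up to $o(1)$, the memory terms thrown off by the conditioning correction, so that the conditional law of each new iterate is genuinely a clean Gaussian with the right covariance. This is the content of the cited theorems, which I would not reproduce.
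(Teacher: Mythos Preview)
The paper does not prove this theorem at all; it is stated with citations to \cite{DBLP:conf/isit/BayatiLM12} and \cite{DBLP:journals/corr/abs-2404-07881} and then used as a black box. Your proposal correctly identifies this as a quoted result and your two-stage sketch (Lindeberg universality to reduce to Gaussian entries, then the Bayati--Montanari conditioning argument) accurately reflects the structure of the cited proofs, so your treatment is consistent with---and indeed more informative than---the paper's.
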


To prove a similar theorem in the Lipschitz setting, we note from a remark in~\cite{DBLP:journals/corr/abs-2003-10431} that~\cite[Proposition 6]{DBLP:conf/isit/BayatiLM12} can be extended to handle Lipschitz denoiser functions, which implies the same universality as for polynomial denoisers.

\begin{theorem}[Lipschitz State Evolution (e.g.~\protect{\cite[Theorem 2.3]{FVRS22}})]
\label{thm:state-evolution}
	Suppose that $x^0, x^1, \ldots, x^T$ is an AMP iteration corresponding to Lipschitz denoisers, with input $X$ a symmetric matrix having i.i.d $\frac {O(1)}{\sqrt n}$-subgaussian entries with mean $0$ and variance $\frac 1n$.  
	Then, for any $\PL(k)$ function $\varphi: \R^{T+1}\rightarrow \R$, 
	\[\plim_{n\rightarrow \infty} \frac 1n \sum_{i=1}^{n}\varphi(x^T_i,x^{T-1}_i,\ldots,x^0_i) = \E[\varphi(U^T, U^{T-1}, \ldots, U^0)]\]
	where $U^0,U^1,\ldots,U^T$ form an appropriate Gaussian process with covariance independent of $n$.
\end{theorem}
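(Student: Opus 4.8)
The plan is to obtain \pref{thm:state-evolution} from two ingredients that are essentially available: state evolution for AMP with Lipschitz denoisers when the input is a \emph{Gaussian} matrix, and a universality statement that replaces the Gaussian entries by arbitrary $\frac{O(1)}{\sqrt n}$-subgaussian entries. For the Gaussian case ($X$ symmetric with i.i.d.\ $\calN(0,\tfrac1n)$ entries), I would run Bolthausen's conditioning argument \cite{Bolt14,BM11}. Writing $u^{(s)} = f_s(x^{(s)},\ldots,x^{(0)})$, one reveals the iterates one at a time: conditioned on $x^{(0)},\ldots,x^{(t)}$, the matrix $X$ is a fixed matrix determined by the linear constraints $Xu^{(s)} = x^{(s+1)} + \Delta_s$ ($s<t$), plus an independent Gaussian matrix supported on the orthogonal complement of $\mathrm{span}\{u^{(0)},\ldots,u^{(t-1)}\}$. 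Substituting this decomposition into $x^{(t+1)} = Xu^{(t)} - \Delta_t$ writes $x^{(t+1)}$ as a linear combination of the earlier iterates, plus a vector with asymptotically i.i.d.\ Gaussian entries, plus a deterministic $\Omega(1)$ drift --- and the drift is exactly what the Onsager term $\Delta_t$ cancels. An induction on $t$, whose hypothesis is that the empirical law of the rows $(x^{(0)}_i,\ldots,x^{(t)}_i)_{i\in[n]}$ converges to the Gaussian process with covariance $\Sigma_{s,s'} = \E[f_s(U^s,\ldots)f_{s'}(U^{s'},\ldots)]$, closes the argument; Lipschitzness of the $f_s$ is what makes the conditional-mean and concentration estimates go through, and $\Sigma$ is manifestly independent of $n$.

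For the universality upgrade, fix a $\PL(k)$ test function $\varphi$; the claim is that $\plim_{n} \frac1n\sum_i \varphi(x^{(T)}_i,\ldots,x^{(0)}_i)$ is unchanged when the off-diagonal entries of $X$ are replaced by Gaussians with matching variance. For polynomial denoisers this is \cite[Proposition~6]{DBLP:conf/isit/BayatiLM12}, proved by a Lindeberg-type entry-by-entry swap, and as remarked in \cite{DBLP:journals/corr/abs-2003-10431} the same argument applies in the Lipschitz setting. Concretely I would swap one symmetric pair $X_{ij}=X_{ji}$ at a time and Taylor-expand $\varphi(x^{(T)},\ldots,x^{(0)})$ in that pair to second order: the zeroth- and first-order terms match in expectation since the swapped variables have matching mean and variance, and the second-order remainder is controlled because (i) each iterate depends on $X_{ij}$ only through $O(1)$ applications of an $O(1)$-Lipschitz map and the bounded-operator-norm matrix, so the relevant second derivatives are small in an averaged sense, and (ii) $\PL(k)$ functions of subgaussian AMP iterates have all moments bounded uniformly in $n$ (which follows from subgaussianity of the entries, $\|X\|_{\op}=O(1)$ w.h.p., and the Lipschitz bounds on the $f_s$, by an induction paralleling \pref{cor:amp-conditioning}). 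Summing the $O(n^2)$ swaps, each contributing $o(n^{-1})$ to the averaged quantity, gives universality; combining with the Gaussian case yields the theorem.

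The delicate point --- and the main obstacle --- is exactly this universality step: verifying that the Lindeberg remainder estimates of \cite[Proposition~6]{DBLP:conf/isit/BayatiLM12}, originally written where the relevant derivatives are polynomials with bounded moments, remain valid when the denoisers are only Lipschitz, which comes down to bounding the second derivatives of the iterates with respect to a single matrix entry together with establishing the uniform-in-$n$ moment bounds above. An alternative route, avoiding a separate universality lemma, is to approximate each Lipschitz denoiser uniformly on large balls by a polynomial with matching first derivatives (via mollification), apply the polynomial state-evolution theorem stated above directly to the subgaussian input, and let the approximation error tend to $0$; the obstacle there is morally the same --- transferring an $L^\infty$-on-compacts approximation of the denoisers to an averaged-$\ell_2$ control of the iterates, which again requires uniform-in-$n$ tail bounds on the iterate entries together with continuity of the Onsager coefficients $B_{t,j}$ in the denoiser.
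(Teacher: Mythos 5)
Your proposal is correct and follows essentially the same route the paper relies on: the paper does not prove this theorem but imports it from the literature (citing \cite{FVRS22} and the extension of \cite[Proposition 6]{DBLP:conf/isit/BayatiLM12} to Lipschitz denoisers noted in \cite{DBLP:journals/corr/abs-2003-10431}), and your two ingredients --- Bolthausen-style conditioning for the Gaussian case and a Lindeberg entry-swap for universality --- are exactly the arguments underlying those citations. The ``delicate point'' you flag (controlling the Lindeberg remainder when the denoisers are only Lipschitz) is precisely the step the paper disposes of by citation rather than by proof, so your sketch is consistent with, and no less complete than, the paper's treatment.
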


As noted by~\cite[Remark 2.4]{FVRS22}, state evolution holds for Lipschitz denoisers when we consider $B_{t,j}$ instead of $b_{t,j}$ in the AMP iteration. By adapting the version of this proof presented in~\cite[Corollary 2]{berthier2020state}, we may also substitute $B_{t,j}$ for $b_{t,j}$ when considering polynomial denoisers.

\subsection*{Consequences of state evolution for Pseudo-Lipschitz functions}

We collect some facts about Pseudo-Lipschitz functions, after which we can prove \pref{cor:amp-conditioning}, which implies the concentration we need for order statistics of the AMP iterates.

\begin{proposition}
\label{prop:prod-pl}
	Suppose that $f: \R^t\rightarrow \R$ is $\PL(a)$ with Lipschitz constant $L_1$ and $g: \R^t\rightarrow \R$ is $\PL(b)$ with Lipschitz constant $L_2$.
	 Furthermore, suppose that $f(\vec 0) = g(\vec 0) = 0$. 
	 Then, $f\cdot g\in \PL(a + b)$ with Lipschitz constant $12L_1 L_2$.
\end{proposition}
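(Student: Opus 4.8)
The plan is to reduce to the $\PL(a)$ and $\PL(b)$ hypotheses via the product-rule identity
\[
f(x)g(x) - f(y)g(y) = f(x)\bigl(g(x) - g(y)\bigr) + g(y)\bigl(f(x) - f(y)\bigr),
\]
so that $|f(x)g(x) - f(y)g(y)| \le |f(x)|\,|g(x)-g(y)| + |g(y)|\,|f(x)-f(y)|$. I will bound each of the two summands by $6L_1L_2\bigl(1 + \|x\|^{a+b-1} + \|y\|^{a+b-1}\bigr)\|x-y\|$ and add. Here I assume $a,b\ge 1$, since being $\PL(0)$ (i.e.\ constant) together with vanishing at $\vec0$ forces the function to be identically $0$, and then the claim is trivial.

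For the first summand, I would use $f(\vec 0)=0$ and the $\PL(a)$ bound at the pair $(x,\vec0)$ to get $|f(x)| \le L_1\bigl(1+\|x\|^{a-1}\bigr)\|x\| = L_1\bigl(\|x\| + \|x\|^{a}\bigr)$, then multiply by $|g(x)-g(y)| \le L_2\bigl(1 + \|x\|^{b-1} + \|y\|^{b-1}\bigr)\|x-y\|$. Expanding the product $\bigl(\|x\|+\|x\|^{a}\bigr)\bigl(1+\|x\|^{b-1}+\|y\|^{b-1}\bigr)$ gives six monomials in $\|x\|,\|y\|$, each of total degree at most $a+b-1$ (this is exactly where $a,b\ge1$ is needed). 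The elementary fact I invoke is that for $u,v\ge 0$ and $0\le p+q\le r$ one has $u^pv^q \le \max(u,v)^{p+q} \le 1 + \max(u,v)^r \le 1 + u^r + v^r$; applying it with $r=a+b-1$ to all six monomials bounds the product by $6\bigl(1+\|x\|^{a+b-1}+\|y\|^{a+b-1}\bigr)$, which yields the claimed bound on the first summand. The second summand is handled symmetrically: bound $|g(y)| \le L_2\bigl(\|y\|+\|y\|^{b}\bigr)$ using $g(\vec0)=0$, multiply by $|f(x)-f(y)| \le L_1\bigl(1+\|x\|^{a-1}+\|y\|^{a-1}\bigr)\|x-y\|$, and repeat the same monomial accounting. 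Adding the two contributions gives Lipschitz constant $6L_1L_2 + 6L_1L_2 = 12L_1L_2$, as stated.

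There is no substantive obstacle; the argument is just a chain of triangle inequalities together with the monomial bound. The only points needing care are the degenerate cases $a=0$ or $b=0$, and checking in the expansion that every cross-term genuinely has total degree $\le a+b-1$ so that the inequality $u^pv^q \le 1+u^r+v^r$ can be applied with a single exponent $r$ throughout.
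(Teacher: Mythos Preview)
Your proof is correct and follows essentially the same approach as the paper: decompose the product difference via a product-rule identity, use $f(\vec 0)=g(\vec 0)=0$ together with the $\PL$ hypotheses to bound $|f|$ and $|g|$ pointwise, and then control the resulting polynomial cross-terms in $\|x\|,\|y\|$ by $1+\|x\|^{a+b-1}+\|y\|^{a+b-1}$. The only cosmetic differences are that the paper uses the symmetric split $\tfrac12|f(x)-f(y)||g(x)+g(y)|+\tfrac12|g(x)-g(y)||f(x)+f(y)|$ rather than your asymmetric $f(x)(g(x)-g(y))+g(y)(f(x)-f(y))$, and bounds the cross-terms by a case analysis on $\max(\|x\|,\|y\|)\le 1$ rather than your per-monomial inequality $u^pv^q\le 1+u^r+v^r$; both routes arrive at the constant $12L_1L_2$.
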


\begin{proof}
	We may write
	\[|f(x)g(x) - f(y)g(y)| \le \tfrac 12 |f(x) - f(y)|\cdot |g(x) + g(y)| + \tfrac 12|g(x) - g(y)|\cdot |f(x) + f(y)|.\]
We can see that $|f(x)| \le L_1(1 + \|x\|^{a-1})\|x\|$ and $|g(x)|\le L_2(1 + \|x\|^{b-1})\|x\|$ by applying centeredness. 
Therefore, for the first term we obtain that
\begin{align*}
|f(x) - f(y)|\cdot |g(x) + g(y)|	 &\le L_1 (1 + \|x\|^{a - 1} + \|y\|^{a-1})\|x - y\| \cdot L_2(\|x\| + \|y\| + \|x\|^{b} + \|y\|^{b})\\
&= L_1 L_2 \|x - y\| \left[(1 + \|x\|^{a - 1} + \|y\|^{a-1}) (\|x\| + \|y\| + \|x\|^{b} + \|y\|^{b})\right]
\end{align*}
It remains to show that the product of the latter two terms is at most $C(1 + \|x\|^{a + b - 1} + \|y\|^{a + b - 1})$.
Consider casing on whether $\max(\|x\|, \|y\|)\le 1$. If this is at most $1$, the above product is bounded by $12$.
 Else, suppose without loss of generality that $\|x\| \ge \|y\|$ and $\|x\| > 1$. Then, we can bound the product by $3\|x\|^{a - 1} \cdot 4\|x\|^{b} = 12\|x\|^{a + b - 1}$. 
 This implies that
\[|f(x) - f(y)|\cdot |g(x) + g(y)| \le 12 L_1 L_2 (1 + \|x\|^{a + b - 1} + \|y\|^{a + b - 1})\|x - y\|\]
and symmetrically for $|g(x) - g(y)|\cdot |f(x) + f(y)|$. 
Thus, we have shown that $f\cdot g \in \PL(a + b)$ with Lipschitz constant $12L_1 L_2$.

\end{proof}

Finally, we prove \pref{cor:amp-conditioning}.

\restatecorollary{cor:amp-conditioning}

Before the proof, note that a priori we have no control over $\ind[g(\vec x_i)^2 > \theta]$ (this is not Pseudo-Lipschitz at any degree). However, we show that we can approximate it above and below by Pseudo-Lipschitz functions and thus still reason about it.

\begin{claim}
	There exists a sequence of Lipschitz functions $f_1(x; L)$ and $f_2(x; L)$ each having Lipschitz constant $L$ such that $f_1(x; L) \le \ind[x > \theta] \le f_2(x; L)$ and as $L\rightarrow\infty$ these bounding functions converge to $\ind[x^2 > \theta]$.
\end{claim}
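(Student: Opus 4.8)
The plan is to explicitly construct the two sandwiching functions as piecewise-linear ``ramps.'' For a slope parameter $L > 0$, define
\[
f_2(x; L) = \min\left(1, \max\left(0, L\left(x - \theta\right) + 1\right)\right),
\]
which equals $0$ for $x \le \theta - \tfrac 1L$, rises linearly with slope $L$ on $[\theta - \tfrac1L, \theta]$, and equals $1$ for $x \ge \theta$; and define
\[
f_1(x; L) = \min\left(1, \max\left(0, L\left(x - \theta\right)\right)\right),
\]
which equals $0$ for $x \le \theta$, rises with slope $L$ on $[\theta, \theta + \tfrac1L]$, and equals $1$ for $x \ge \theta + \tfrac1L$. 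First I would check that each is the composition of the $1$-Lipschitz maps $\min(1,\cdot)$ and $\max(0,\cdot)$ with the affine map $x \mapsto L(x-\theta)(+1)$, hence Lipschitz with constant exactly $L$ (it is genuinely $L$ on the sloped piece). Then I would verify the sandwich pointwise: since $\ind[x>\theta]$ jumps from $0$ to $1$ at $x = \theta$, and $f_1$ stays at $0$ until $\theta$ while $f_2$ has already reached $1$ by $\theta$, one gets $f_1(x;L) \le \ind[x>\theta] \le f_2(x;L)$ for all $x$ (a short case check at $x = \theta$ itself, where $\ind = 1 \ge f_1(\theta;L) = 0$ and $f_2(\theta;L) = 1$, handles the boundary; note the statement is written with $x^2$, so one composes with $x \mapsto x^2$ or simply applies the one-variable claim to the argument $g(\vec x_i)^2$).

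Next I would establish the convergence as $L \to \infty$: for any fixed $x \ne \theta$, once $L > 1/|x - \theta|$ both $f_1(x;L)$ and $f_2(x;L)$ equal $\ind[x > \theta]$ exactly, so the convergence is pointwise (indeed eventually-constant) off the single point $\theta$. This is the sense in which the bounding functions ``converge to $\ind[x^2 > \theta]$'': applied inside the sum $\tfrac1n \sum_i f(\vec x_i)^2 f_j(g(\vec x_i)^2; L)$ and combined with state evolution (\pref{thm:state-evolution} / polynomial state evolution) plus dominated convergence against the limiting Gaussian process, the upper and lower estimates pinch together as $L \to \infty$, because the event $\{g(U^\bullet)^2 = \theta\}$ has measure zero for the nondegenerate Gaussian process. (For the actual application in \pref{cor:amp-conditioning} one does not even need to send $L \to \infty$ — a single finite $L$ with $f_2 \in \PL$ suffices to get the stated bound via \pref{prop:prod-pl} — but the claim as stated asks for the limiting behavior, which the construction above supplies.)

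The only mild subtlety — and the step I would be most careful about — is ensuring the product $f(\vec x_i)^2 \cdot f_j(g(\vec x_i)^2; L)$ is itself Pseudo-Lipschitz so that state evolution applies: this is exactly where \pref{prop:prod-pl} comes in, noting that $f_j(\,\cdot\,;L)$ is Lipschitz (so $\PL(1)$) but does \emph{not} vanish at $0$ in general, so one should first write $f_j(y;L) = f_j(0;L) + \big(f_j(y;L) - f_j(0;L)\big)$, bound the constant part trivially, and apply \pref{prop:prod-pl} to the centered part. Everything else is elementary piecewise-linear bookkeeping.
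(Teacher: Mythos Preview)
Your construction is exactly the paper's: the piecewise-linear ramps you write via $\min/\max$ coincide with the explicit case definitions the paper gives for $f_1$ and $f_2$, and the paper's proof is simply ``by definition satisfy the given constraint,'' so you have in fact supplied more detail (Lipschitz verification, pointwise sandwich, convergence) than the original. One tiny slip: at $x=\theta$ the indicator $\ind[x>\theta]$ equals $0$, not $1$ as you wrote, but since $f_1(\theta;L)=0$ and $f_2(\theta;L)=1$ the sandwich $0\le 0\le 1$ still holds and nothing in the argument is affected.
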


\begin{proof}
	Define 
	\[f_1(x; L) = \begin{cases}
 0 & \qquad x \le \theta\\
 L(x - \theta) & \qquad 	\theta < x \le \theta + \frac 1L\\
 1 & \qquad \text{otherwise}
 \end{cases}
 \qquad \text{and} \qquad
 f_2(x; L) = \begin{cases}
 0 & \qquad x \le \theta - \frac 1L\\
 L\left(x - \theta\right) + 1 & \qquad \theta - \frac 1L <	x \le \theta\\
 1 & \qquad \text{otherwise}
 \end{cases}
\]
which by definition satisfy the given constraint.
\end{proof}

This implies that state evolution holds for indicators (and we can treat them as if they are another Lipschitz function).

\begin{proof}[Proof of~\protect{\pref{cor:amp-conditioning}}]
	Let's begin with the first bullet point. By state evolution, we have that 
\[\plim_{n\rightarrow\infty} \frac 1n \sum_{i=1}^{n}f(\vec x_i)^2 \ind[g(\vec x_i)^2 > \theta] = \E_{U\sim N(0,\Sigma)}\left[f(\vec U)^2 \ind[g(\vec U)^2 > \theta]\right],\]
where $\Sigma$ is the covariance matrix of $\vec U$.
Note that for any $r \ge 1$, $ \boldsymbol{1}[x^2 > \theta] \le \frac{x^{2r}}{\theta^{r}}$ (by case analysis on whether $x^2 > \theta$).

Therefore, we find that
\[\E_{U\sim N(0,\Sigma)}\left[f(\vec U)^2 g(\vec U)^2 \ind[g(\vec U)^2 > \theta]\right] \le \frac 1{\theta^{r}}\E_U\left[(f(\vec U)g(\vec U)^r)^{2}\right].\]

By~\pref{prop:prod-pl}, we have that $f(\vec U)g(\vec U)^r \in \PL(a + br)$.

Define $h(\vec x) = f(\Sigma^{1/2}x)\cdot g(\Sigma^{1/2}x)^r$, which is also $\PL(a + br)$ and centered with Lipschitz constant at most $(12L\|\Sigma\|_{\op})^r$.
In particular, we thus have that $h(\vec x) \le (12L\|\Sigma\|_{\op})^{r} \cdot (\|x\| + \|x\|^{a + br})$ and 
\[h(\vec x)^{2} \le 2(12L\|\Sigma\|_{\op})^{2r}(\|x\|^2 + \|x\|^{2(a + br)}).\]
Now, we may compute that
\begin{align*}
\E_{g\sim N(0,I)}[h(\vec g)^{2r}] &\le 2(12L\|\Sigma\|_{\op})^{2r} \cdot \left(1 + \sum_{k_0 + k_1 + \cdots + k_t = a + br} \,\,\,\prod_{i=0}^{t}\E\left[x_i^{2k_i}\right]\right)\\
&\le 2(12L\|\Sigma\|_{\op})^{2r}\left(1 + \binom{t+a+br}{t} (2(a + br))^{a + br}\right)\\
&\le 2(12L\|\Sigma\|_{\op})^{2r} \cdot (2(a + br))^{a + br + t}\\
&\le (24L\|\Sigma\|_{\op})^{2r} \cdot (3br)^{br}
\end{align*}
where in the last two steps we used that $r\gg \max(a,t)$.\\ \\

Now, we may use this to prove the second bullet point.
We can assume without loss of generality that $f(\vec 0) = 0$: otherwise, note that $f(x)^2 \le 2(f(x) - f(0))^2 + 2f(0)^2$ so we can with a factor of $2$ loss consider centering $f$. 
Furthermore, note that we only need to consider the top $\eps$ quantile of indices $i$ to prove the statement of the lemma. 
Now, for any $\theta > 1$, consider writing
\begin{align*}
	\sum_{i=1}^{n}f(\vec x_i)^2\boldsymbol{1}[i\text{ in }\eps\text{ quantile}] &= \sum_{i=1}^{n}f(\vec x_i)^2\boldsymbol{1}[i\text{ in }\eps\text{ quantile}]\boldsymbol{1}[f(\vec x_i)^2 \le \theta] + \sum_{i=1}^{n}f(\vec x_i)^2\boldsymbol{1}[i\text{ in }\eps\text{ quantile}]\boldsymbol{1}[f(\vec x_i)^2 > \theta]\\
	&\le \eps\theta \cdot n + \sum_{i=1}^{n}f(\vec x_i)^2\boldsymbol{1}[f(\vec x_i)^2 > \theta].
\end{align*}
Therefore, dividing both sides by $n$ and taking the $\plim$ implies that
\[\plim_{n\rightarrow \infty}\frac 1n \sum_{i=1}^{n}f(\vec x_i)^2\boldsymbol{1}[i\text{ in }\eps\text{ quantile}] \le \eps \theta + \plim_{n\rightarrow \infty} \frac 1n \sum_{i=1}^{n}f(\vec x_i)^2\boldsymbol{1}[f(\vec x_i)^2 > \theta].\]

Our goal is to show that by choosing $\theta = \Theta(\log^k \frac 1\eps)$, the latter expectation is $O(\eps \log^k \frac 1\eps)$ which would complete the proof.
By the above result, we have that
\[\plim_{n\rightarrow\infty} \frac 1n \sum_{i=1}^{n}f(\vec x_i)^2 \ind[f(\vec x_i)^2 > \theta] \le \frac 1{\theta^{r}}\cdot C^{2r} \cdot (3k r)^{k r}\]
(since $f\in \PL(k)$). 
Take $r = \log \frac 1\eps$ and $\theta = 3ek C^2\cdot r^k = \Theta_\eps(\log^k \frac 1\eps)$. 
From here, we find that
\begin{align*}
\plim_{n\rightarrow\infty} \frac 1n \sum_{i=1}^{n}f(\vec x_i)^2 \ind[f(\vec x_i)^2 > \theta] &\le \frac 1{\theta^{r}}\cdot C^{2r} \cdot (3k r)^{k r}\\
&= \left(\frac{3k C^2 \cdot r^k}{3ek C^2 \cdot r^k}\right)^r\\
&\le \eps.\end{align*}
Thus, we have that
\begin{align*}
	\plim_{n\rightarrow \infty}\frac 1n \sum_{i=1}^{n}f(\vec x_i)^2\boldsymbol{1}[i\text{ in }\eps\text{ quantile}] &\le \eps \theta + \E\left[f(\vec U)^2\boldsymbol{1}[f(U)^2 > \theta]\right] \le 3ekC^2 \cdot \eps \log^k \frac 1\eps
\end{align*}
which is exactly as desired.
\end{proof}

\end{document}